\documentclass[conference,10 pt,onecolumn]{IEEEtran}       
\IEEEoverridecommandlockouts                              


\usepackage[margin=0.9in]{geometry}

\usepackage{amsmath, amssymb, amsfonts}
\interdisplaylinepenalty=2500

\usepackage{amsthm}
\usepackage{mathtools}
\usepackage{cite}
\usepackage{graphicx}
\usepackage{hyperref, url}
\usepackage{tikz,pgfplots,pgfplotstable,xcolor}
\usetikzlibrary{matrix}
\usetikzlibrary{cd}
\usetikzlibrary{calc}
\usetikzlibrary{arrows}      
\usetikzlibrary{decorations.markings}
\usetikzlibrary{positioning}
\usepackage{booktabs}       
\usepackage{threeparttable}
\usepackage{subcaption} 
\usepackage{xcolor}	
\usepackage[nice]{nicefrac}

\makeatletter
\newcommand\footnoteref[1]{\protected@xdef\@thefnmark{\ref{#1}}\@footnotemark}
\makeatother

\newcommand{\Rb}{\mathbb{R}}
\newcommand{\Kcal}{\mathcal{K}}

\newcommand{\Acal}{\mathcal{A}}

\newcommand{\Scal}{\mathcal{S}}

\newcommand{\Ycal}{\mathcal{Y}}
\newcommand{\Zcal}{\mathcal{Z}}

\newcommand{\TCI}{\widetilde{CI}}
\newcommand{\TSI}{\widetilde{SI}}
\newcommand{\TUI}{\widetilde{UI}}
\newcommand{\norm}[1]{\left\lVert#1\right\rVert}
\newcommand\logeq{\mathrel{\vcentcolon\Longleftrightarrow}}
\newcommand{\uge}{\sqsupseteq}  
\newcommand{\mge}{\sqsupseteq'} 

\newcommand{\lnn}[1]{\ln\left(#1\right)}
\newcommand{\SK}[3]{S_{\rightarrow}\!\left({#1};{#2}\!\left|{#3} \right. \right)} 
\newcommand{\SKK}[3]{S_{\leftrightarrow}\!\left({#1};{#2}\!\left|{#3} \right. \right)} 
\newcommand{\Prv}[1]{\,{\rm Pr}\!\left[#1\right]}

\DeclareMathOperator*{\argmin}{arg\,min}

\DeclareMathOperator*{\conv}{conv}

\DeclareMathOperator*{\simu}{sim}
\DeclareMathOperator*{\bigtimes}{\textnormal{\Large $\times$}}

\theoremstyle{definition}
\newtheorem{theorem}{Theorem}
\newtheorem{lemma}[theorem]{Lemma}
\newtheorem{proposition}[theorem]{Proposition}
\newtheorem{corollary}[theorem]{Corollary}
\newtheorem{definition}[theorem]{Definition}
\newtheorem{example}[theorem]{Example}
\newtheorem*{example*}{Example~\ref{prop:vanishingUIequiv}b)}

\newtheorem{remark}[theorem]{Remark}

\definecolor{mahogany}{rgb}{0.65, 0., 0.5}
\definecolor{darkred}{rgb}{0.5,0,0}

\title{\LARGE \bf
Unique Informations and Deficiencies
}

\author{Pradeep Kr. Banerjee$^{\ast}$, Eckehard Olbrich, J\"urgen Jost, and Johannes Rauh
\thanks{$^{\ast}$The authors are with the Max Planck Institute for Mathematics in the Sciences, Leipzig, Germany.}
\thanks{{Email: \tt\small\{pradeep,olbrich,jjost,jrauh\}@mis.mpg.de}}
}

\begin{document}

\maketitle
\thispagestyle{plain} 
\pagestyle{plain}

\begin{abstract}

Given two channels that convey information about the same random variable, we introduce two measures of the unique information of one channel with respect to the other. The two quantities are based on the notion of generalized weighted Le Cam deficiencies and differ on whether one channel can approximate the other by a randomization at either its input or output. We relate the proposed quantities to an existing measure of unique information which we call the minimum-synergy unique information. We give an operational interpretation of the latter in terms of an upper bound on the one-way secret key rate and discuss the role of the unique informations in the context of nonnegative mutual information decompositions into unique, redundant and synergistic components.

\end{abstract}
\begin{IEEEkeywords}
	Synergy, redundancy, unique information, Le Cam deficiency, degradation preorder, input-degradedness preorder, Secret key rate. 
\end{IEEEkeywords}


\begin{center}
\fbox{\parbox{0.8\textwidth}{
  \textbf{Note:}
  The material in this manuscript has been presented at the Allerton conference 2018~\cite{Allerton2018}.  This manuscript contains some corrections: most notably, Lemma~18 was removed and Proposition~\ref{prop:vanishingUIequiv} was corrected.  The numbering of equations and results in this file agrees with the numbering of the published version.
}}
\end{center}

\section{Introduction}
Consider three random variables~$S$, $Y$, $Z$ with finite alphabets. Suppose that we want to know the value of~$S$, but we can only observe~$Y$ and~$Z$. The mutual information between~$S$ and~$Y$ can be decomposed into information that~$Y$ has about~$S$ that is \emph{unknown} to~$Z$ (we call this the \emph{unique} or \emph{exclusive} information of~$Y$ w.r.t.~$Z$) and information that~$Y$ has about~$S$ that is \emph{known} to~$Z$ (we call this the \emph{shared} or \emph{redundant} information).
\begin{align}\label{eq:MIdec1}
I(S; Y) = \underbrace{\TUI(S;Y \backslash Z)}_{\text{unique $Y$ wrt $Z$}}+\underbrace{\TSI(S; Y,Z)}_{\text{shared (redundant)}}.
\end{align}
Conditioning on~$Z$ annihilates the shared information but creates \emph{complementary} or \emph{synergistic} information from the interaction of~$Y$ and~$Z$. 
\begin{align}\label{eq:MIdec2}
I(S; Y|Z) = \underbrace{\TUI(S;Y \backslash Z)}_{\text{unique $Y$ wrt $Z$}}+\underbrace{\TCI(S; Y,Z)}_{\text{complementary (synergistic)}}.
\end{align}
Using the chain rule, the total information that the pair~$(Y,Z)$ conveys about~$S$ can be decomposed into four terms.
\begin{align}\label{eq:MIdec3}
I(S; YZ) &= I(S;Y)+I(S;Z|Y)\notag\\
&=\TUI(S;Y \backslash Z)+\TSI(S; Y,Z)+\TUI(S;Z \backslash Y)+\TCI(S; Y,Z),
\end{align}
where~$\TUI$, $\TSI$, and~$\TCI$ are nonnegative functions that depend continuously on the joint distribution of $(S,Y,Z)$. 
Nonnegative information decompositions of this form have been considered in~\cite{e16042161,HarderSalgePolani2013:Bivariate_redundancy,GriffithKoch2014:Quantifying_Synergistic_MI,WilliamsBeer}.

Any definition of the function~$\TUI$ fixes two of the terms in~\eqref{eq:MIdec3} which in turn also determines the other terms by (\ref{eq:MIdec1}) and (\ref{eq:MIdec2}). 
This gives rise to the \emph{consistency condition}: 
\begin{align} \label{eq:consistency}
	I(S;Y)+\TUI(S;Z \backslash Y)=I(S;Z)+\TUI(S;Y \backslash Z).
\end{align}
One can thus interpret the unique information as either the conditional mutual information without the synergy, or as the mutual information without the redundancy. The difference of the redundant and synergistic information is called the \emph{coinformation}~$CoI(S;Y;Z)$ which is symmetric in its arguments and can be negative: $CoI(S;Y;Z)=\TSI(S; Y,Z)-\TCI(S; Y,Z)=I(S; Y)-I(S; Y|Z)$~\cite{Bell2003}. Coinformation is a widely used measure in the neurosciences~\cite{syncode,LathamNirenberg05:Synergy_and_redundancy_revisited} with negative values being interpreted as synergy~\cite{kontoyiannis2005ISIT} and positive values as redundancy~\cite{syncode}. If the interactions induce redundancy and synergy in equal measure, then the coinformation cannot detect it. \emph{Correlational importance}, a nonnegative measure introduced in~\cite{LathamNirenberg05:Synergy_and_redundancy_revisited} (see also~\cite{minsyn2017}) to quantify the importance of correlations in neural coding is similar in spirit to the synergistic information. However, examples are known~\cite{schneidman2003synergy} when it can exceed the total mutual information.

The notions of synergy, redundancy and unique information also appear implicitly in information-theoretic cryptography~\cite{ahlswede1993,maurerintrinsic,renner2002ISIT,rauh2017secret}. Consider the source model for secret key agreement between distant Alice and Bob against an adversary, Eve~\cite{maurer1993,maurerintrinsic}. Alice, Bob and Eve observe i.i.d. copies of random variables~$S$, $Y$ and~$Z$ respectively, where~$(S,Y,Z)\sim P_{SYZ}$. Alice and Bob want to compute a secret key by communicating messages over a noiseless but insecure (public) channel transparent to Eve such that Eve's total information ($Z$ and the entire public communication) about the key is negligibly small. The maximum (asymptotic) rate at which Alice and Bob can compute a key is called the \emph{two-way secret key rate}~$\SKK{S}{Y}{Z}$. If Alice is allowed to use the public channel only once and Bob does not transmit, then the corresponding quantity is called the \emph{one-way secret key rate}~$\SK{S}{Y}{Z}$. 

An instance of a purely synergistic interaction is the \textsc{XOR} distribution: $Y$ and~$Z$ are independent binary random variables and~$S=Y+Z \mod 2$. Here~$I(S; YZ)=\TCI(S; Y,Z)=1$~bit. Clearly, if Alice can only see $S$ and Bob $Y$, then they cannot realize a secret key. However if Alice can also see~$Z$, then she can compute~$Y$ which can be used as a key perfectly secret from Eve since Eve's variable~$Z$ is independent of the key~$Y$. 

An instance of a purely redundant interaction is the \textsc{RDN} distribution: $S$, $Y$, $Z$ are uniformly distributed binary random variables with~$S=Y=Z$. Here~$I(S; YZ)=\TSI(S; Y,Z)=1$~bit. Alice and Bob cannot share a secret since Eve knows the exact values of~$S$ and~$Y$. 

Intuitively, if Bob has some unique information about Alice's variable~$S$ (that is not known to Eve), then there must be a situation where Bob can \emph{exploit} this information to his advantage. A distribution combining the \textsc{XOR} and \textsc{RDN} exemplifies such an advantage.
\begin{example} [\cite{RennerW03,maurerintrinsic}] \label{ex:sktoy}
	Consider the joint distribution
	\begin{center}
		\resizebox{.32\textwidth}{!}{%
		\centering
		\begin{tabular}{clcccc}
			\toprule
			\multicolumn{1}{c}{} & \multicolumn{4}{c}{$S$}\\
			\cmidrule{2-5}
			$Y$ ($Z$) & 0 & 1 & 2 & 3\\
			\midrule
			0 &  $\nicefrac{1}{8}$\ (0) & $\nicefrac{1}{8}$\ (1) &. &.\\
			1 &  $\nicefrac{1}{8}$\ (1) & $\nicefrac{1}{8}$\ (0) &. &.\\
			2 &.     &. &$\nicefrac{1}{4}$\ (2) &.\\
			3 &.     &. &. & $\nicefrac{1}{4}$\ (3)\\
			\bottomrule
		\end{tabular}}
	\end{center}	
	where $Z$'s value is shown in parentheses. For instance, the first entry of the table is read as~$P_{SYZ}(0,0,0)=\tfrac{1}{8}$. Here~$I(S;YZ)=2$, $\TSI(S; Y,Z)=1.5$, $\TCI(S; Y,Z)=0.5$ and $\TUI(S;Y\backslash Z)=0$\footnote{We compute the decomposition using a definition of the function~$\TUI$ proposed in~\cite{e16042161}. An efficient algorithm was recently proposed in~\cite{CUIfullver}.}.
	If Eve sees 2 or 3, she knows the exact values of~$S$ and~$Y$. When she sees 0 or 1, she can infer that Alice and Bob's values are in~$\{0,1\}$, but in this range, their observations are independent. Hence, no secret key agreement is possible. 
	
	Consider now the modified distribution
		\begin{center}
			\resizebox{.32\textwidth}{!}{%
				\centering
				\begin{tabular}{clcccc}
					\toprule
					\multicolumn{1}{c}{} & \multicolumn{4}{c}{$S$}\\
					\cmidrule{2-5}
					$Y$ ($Z$) & 0 & 1 & 2 & 3\\
					\midrule
					0 &  $\nicefrac{1}{8}$\ (0) & $\nicefrac{1}{8}$\ (1) &. &.\\
					1 &  $\nicefrac{1}{8}$\ (1) & $\nicefrac{1}{8}$\ (0) &. &.\\
					2 &.     &. &$\nicefrac{1}{4}$\ (0) &.\\
					3 &.     &. &. & $\nicefrac{1}{4}$\ (1)\\
					\bottomrule
			\end{tabular}}
		\end{center}	 
	where Eve's variable~$Z$ can only assume binary values. For this distribution, $I(S; YZ)=2$,~$\TSI(S; Y,Z)=0.5$, $\TCI(S; Y,Z)=0.5$ and~$\TUI(S; Y\backslash Z)=1$. Now Bob has \emph{unique information} about Alice's values w.r.t. Eve (namely, the ability to distinguish whether Alice sees values in the~\textsc{XOR} or the~\textsc{RDN} quadrant) which he can use to agree on 1~bit of secret. The \emph{intrinsic information}\footnote{The intrinsic information violates the consistency condition~\eqref{eq:consistency} and cannot be interpreted as unique information in our sense.} $I(S;Y\!\!\downarrow \!Z):=\min_{P_{Z'|Z}}I(S;Y|Z')$, a well-known upper bound on the two-way secret key rate~\cite{maurerintrinsic} is not tight in this toy example. It evaluates to 1.5~bits.
\end{example}

How can we decide if~$Y$ has some unique information about~$S$ (that is not known to~$Z$)? Consider the channels~$\kappa$ and~$\mu$ with the common input alphabet~$\Scal$ in Fig.~\ref{fig:1}. If~$\mu$ reduces to~$\kappa$ by adding a post-channel~$\lambda$ at its \emph{output}, then~$\mu$ may be said to \emph{include}~$\kappa$. One can draw the same conclusion for the channels~$\bar{\kappa}$ and~$\bar{\mu}$ with the common output alphabet~$\Scal$ in Fig.~\ref{fig:2}, if~$\bar{\mu}$ reduces to~$\bar{\kappa}$ by adding a pre-channel~$\bar{\lambda}$ at its \emph{input}. These are special cases of a \emph{channel inclusion} preorder first studied by Shannon~\cite{shannonorder}. In both these situations, one would expect that $Y$ provides no unique information about~$S$ w.r.t.~$Z$. A nonvanishing unique information would then quantify how far is one channel from being an inclusion or randomization of the other. 
\begin{figure}
	\centering
	\begin{subfigure}[b]{0.33\textwidth}
		\centering
		\begin{tikzpicture}[baseline=(current bounding box.center)]
		\node	(P) at	(-1,-0.5)  {$\pi_S\sim S$}; 
		\node	(S) at	(0,0)  {$\Scal$};
		\node	(Y)	 at	(3,0) {$\Ycal$};  
		\node	(S1) at	(0,-1) {$\Scal$}; 
		\node	(Y1) at	(3,-1) {$\Ycal$};  
		\node	(Z)	 at	(1.5,-1) {$\Zcal$}; 
		\draw[->,right, above] (S) to node {$\kappa$} (Y);
		\draw[->,right, above] (S1) to node {$\mu$} (Z);
		\draw[->,right, dashed, above] (Z) to node [xshift=0pt,yshift=0pt] {$\lambda$} (Y1);
		\end{tikzpicture}
		\caption{}
		\label{fig:1} 
	\end{subfigure}\qquad\qquad\qquad\qquad
   \begin{subfigure}[b]{0.33\textwidth}
		\centering
		\begin{tikzpicture}[baseline=(current bounding box.center)]
		\node	(P) at	(-1,-0.5)  {$\pi_Y\sim Y$}; 
		\node	(S) at	(3,0)  {$\Scal$};
		\node	(Y)	 at	(0,0) {$\Ycal$};  
		\node	(S1) at	(3,-1) {$\Scal$}; 
		\node	(Y1) at	(0,-1) {$\Ycal$};  
		\node	(Z)	 at	(1.5,-1) {$\Zcal$}; 
		\draw[->,right, above] (Y) to node {$\bar{\kappa}$} (S);
		\draw[->,right, dashed, above] (Y1) to node {$\bar\lambda$} (Z);
		\draw[->,right, above] (Z) to node [xshift=0pt,yshift=0pt] {$\bar{\mu}$} (S1);
		\end{tikzpicture}
		\caption{}
		\label{fig:2} 
	\end{subfigure}
	\caption{(a) Simulation of the channel~$\kappa$ by a randomization at the output of~$\mu$. $\kappa$~and~$\mu$ share a \emph{common input} alphabet~$\Scal$. (b) Simulation of the channel~$\bar\kappa$ by a randomization at the input of~$\bar{\mu}$. $\bar\kappa$ and~$\bar\mu$ share a \emph{common output} alphabet~$\Scal$.}
\end{figure}
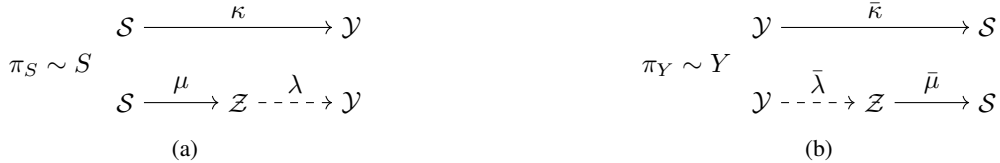

Depending on whether such a randomization is applied at the output or the input, two different ways of quantifying the unique information arise. Utilizing tools from statistical decision theory,~\cite{e16042161} defined the function~$\TUI$ based on the idea of approximating one channel from the other by a randomization at the \emph{output} (see Fig.~\ref{fig:1}). \cite{HarderSalgePolani2013:Bivariate_redundancy} defined the function~$\TSI$ as a difference of two Kullback-Leibler divergence terms where one of the terms implicitly uses a randomization at the \emph{input} (see Fig.~\ref{fig:2}). 
In both cases, the induced decompositions of the total mutual information are nonnegative and satisfy equations~\eqref{eq:MIdec1}-\eqref{eq:MIdec3}. While quantitative differences between the two decompositions have been studied earlier (see, e.g.,~\cite[Figure~1]{e16042161}), the aforementioned distinction seems to have largely gone unnoticed in the literature. Also, as Example~\ref{ex:sktoy} seem to suggest, the unique information is an interesting quantity in its own right that might play a role in bounding the secret key rate. An objective study relating the two quantities is missing.  

In this paper, we bridge these little gaps and make the following contributions:
\begin{itemize}
	\item Given two channels that convey information about the same random variable, we propose two measures of unique information of one channel w.r.t. the other. They are both based on a generalized version of Le Cam's notion of \emph{weighted deficiency}~\cite{lecam,torgersen,raginsky2011} of one channel w.r.t. another. 
	Weighted deficiencies measure the cost of approximating one channel from the other via randomizations. 
	Depending on whether the randomization is applied at the output or the input, two different notions of weighted deficiency arise. We call the respective quantities the \emph{weighted output} and \emph{weighted input deficiencies}. 	
	The new quantities induce nonnegative decompositions of the mutual information. Interestingly, the decomposition induced by the weighted input deficiency coincides with the one proposed in~\cite{HarderSalgePolani2013:Bivariate_redundancy}.
	\item We show that the definition of the unique information proposed in~\cite{e16042161} shares some intuitive and basic properties of the secret key rate~\cite{maurerintrinsic}.
	We give an operational interpretation of this quantity in terms of an upper bound on the \emph{one-way secret key rate}. Theorem~\ref{thm:uppbound} is our main result in this part.
	As a minor side note, for secret key agreement against active adversaries, we restate Maurer's impossibility result~\cite{maurersimul1} in terms of vanishing unique informations in Theorem~\ref{thm:simul}. 
\end{itemize}
Proofs are collected in Appendix~\ref{app:proofs}.

\section{Unique informations and Le Cam Deficiencies}
Suppose that an agent has a finite set of actions~$\Acal$. Each action~$a\in\Acal$ incurs a bounded loss~${\ell}(s,a)$ that depends on the chosen action~$a$ and the state~$s\in\Scal$ of a finite random variable~$S$. Let~$\pi_S$ encode the agents' uncertainty about the true state~$s$. Then, the triple~$(\pi_S,\Acal,\ell)$ is called a \emph{decision problem}. In the sequel, we assume that~$\pi_S$ has full support. Before choosing her action, the agent is allowed to observe a finite random variable~$Z$ through a \emph{channel} from~$\Scal$ to~$\Zcal$ which is a family~$\mu=\{\mu_s\}_{s\in\Scal}$ of probability distributions on~$\Zcal$, one for each possible input~$s\in\Scal$. Let~$\mathsf{M}(\Scal;\Zcal)$ denote the space of all channels from~$\Scal$ to~$\Zcal$ which is the set of all (row) stochastic matrices~$[0,1]^{\Scal\times\Zcal}$. The goal of a rational agent is to choose a strategy~$\rho\in\mathsf{M}(\Zcal;\Acal)$ that minimizes her expected loss or \emph{risk}
\begin{align}
R(\pi_S,\mu,\rho,\ell):=\sum_{s\in\Scal}{\pi_S(s)}\sum_{a\in\Acal}{\rho\circ\mu_s(a)\ell(s,a)},
\end{align}
where~$\rho\circ\mu$ (read~$\rho$ after~$\mu$) denotes the composition of the channels~$\rho$ and~$\mu$.
Writing~$\Acal_\mu=\{\rho\circ\mu\,:\,\rho\in\mathsf{M}(\Zcal;\Acal)\}$, the \emph{optimal risk} when using the channel~$\mu$ is
\begin{align}
R(\pi_S,\mu,\ell):=\min_{\sigma\in\Acal_\mu}\sum_{s\in\Scal}{\pi_S(s)}\sum_{a\in\Acal}{\sigma_s(a)\ell(s,a)}.
\end{align} 
In this minimum, there always exist deterministic optimal strategies. So it suffices to consider deterministic strategies.

Suppose now that the agent is allowed to observe another finite random variable~$Y$ through a second channel~$\kappa\in\mathsf{M}(\Scal;\Ycal)$ with the same input alphabet~$\Scal$. 
When will she \emph{always} prefer~$Z$ to~$Y$? She can rank the variables by comparing her optimal risks: she will \emph{always} prefer~$Z$ over~$Y$ if her optimal risk when using~$Z$ is at most that when using~$Y$ for any decision problem. We have the following definition.
\begin{definition}\label{def:preorder1}
	Let~$(S,Y,Z)\sim P$, $S\sim\pi_S$ and~$\kappa\in\mathsf{M}(\Scal;\Ycal)$, $\mu\in\mathsf{M}(\Scal;\Zcal)$ be two channels with the same input alphabet such that~$P_{SZ}(s,z) = \pi_S(s)\mu_s(z)$ and $P_{SY}(s,y) = \pi_S(s)\kappa_s(y)$.
		We say that~$Z$ is \emph{always more informative} about~$S$ than~$Y$ and write~$Z\uge_{S} Y$ if~$R(\pi_S,\kappa,\ell)\ge R(\pi_S,\mu,\ell)$ for any~$(\pi_S,\Acal,\ell)$.
\end{definition}
She can also rank the variables purely probabilistically: she will \emph{always} prefer~$Z$ over~$Y$ if, knowing~$Z$, she can simulate a single use of~$Y$ by randomly sampling a $y'\in\Ycal$ after each observation~$z\in\Zcal$.
\begin{definition}\label{def:preorder2}
	Write~$Z\mge_{S} Y$ if there exists a random variable~$Y'$ such that the pairs~$(S,Y)$ and $(S,Y')$ are statistically indistinguishable, and $S-Z-Y'$ is a Markov chain.
\end{definition}
The relation~$Z\mge_{S} Y$ is also called the \emph{degradation preorder}. Intuitively,~$Z$ knows everything that $Y$ knows about~$S$ in both these situations. In a classic result, Blackwell showed the equivalence of these two relations. 
\begin{theorem}\label{thm:BW53} (Blackwell's Theorem~\cite{Blackwell1953,BlackwellISIT})
		{$Z\mge_{S} Y$} $\iff$ {$Z\uge_{S} Y$}.	
\end{theorem}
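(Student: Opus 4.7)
The plan is to prove the two implications separately. The easy direction ($\mge$ implies $\uge$) is a data-processing argument: the hypothesis $Z\mge_S Y$ amounts to $\kappa = \lambda \circ \mu$ for some $\lambda \in \mathsf{M}(\Zcal;\Ycal)$ (obtained by conditioning $Y'$ on $Z$). For any decision problem $(\pi_S,\Acal,\ell)$ and any strategy $\rho \in \mathsf{M}(\Ycal;\Acal)$ using $Y$, the composition $\rho \circ \lambda \in \mathsf{M}(\Zcal;\Acal)$ is a strategy using $Z$ with identical risk. Hence $\Acal_\kappa \subseteq \Acal_\mu$, and taking infima gives $R(\pi_S,\mu,\ell) \le R(\pi_S,\kappa,\ell)$ for every loss.

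For the converse, I would argue by contraposition via a separating hyperplane. Suppose $\kappa \notin \Lambda(\mu) := \{\lambda\circ\mu : \lambda\in \mathsf{M}(\Zcal;\Ycal)\}$. This set is convex and compact in $\Rb^{\Scal\times\Ycal}$ (the continuous image of a product of simplices), so there exist scalars $\{c(s,y)\}$ with
\[
\sum_{s,y} c(s,y)\,\kappa_s(y) \;<\; \sum_{s,y} c(s,y)\,(\lambda\circ\mu)_s(y) \qquad \text{for all } \lambda \in \mathsf{M}(\Zcal;\Ycal).
\]
Turn this functional into a decision problem by taking the uniform prior $\pi_S(s) = 1/|\Scal|$, action space $\Acal := \Ycal$, and loss $\ell(s,y) := |\Scal|\,c(s,y)$, so that $c(s,y) = \pi_S(s)\ell(s,y)$. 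Since the identity map $\mathrm{id} \in \mathsf{M}(\Ycal;\Ycal)$ is itself a valid strategy when observing $Y$, we have
\[
R(\pi_S,\kappa,\ell) \;\le\; \sum_{s,y} c(s,y)\,\kappa_s(y) \;<\; \min_{\lambda} \sum_{s,y} c(s,y)\,(\lambda\circ\mu)_s(y) \;=\; R(\pi_S,\mu,\ell),
\]
where the last equality uses that deterministic strategies suffice and that $\Acal_\mu$ (with $\Acal=\Ycal$) is exactly $\Lambda(\mu)$. This contradicts $Z\uge_S Y$.

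The main obstacle, though classical, is the translation between the algebraic separation and a legitimate decision problem. Two small technicalities deserve care: the loss $\ell$ must be bounded (automatic in the finite setting) and the prior $\pi_S$ should have full support (the uniform choice suffices, and in any case strict separation is preserved under small perturbations of $\pi_S$). Since Blackwell's theorem is a well-documented tool used here only as a stepping stone toward the deficiency-based definitions developed in the rest of the paper, I would keep the exposition short and defer to \cite{Blackwell1953,BlackwellISIT} for the detailed argument.
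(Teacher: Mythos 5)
The paper treats Theorem~\ref{thm:BW53} as a cited result (Blackwell's theorem) and does not provide its own proof, so there is no in-paper argument to compare against. Your sketch is the standard proof of the Blackwell--Sherman--Stein theorem in the finite setting, and both directions are essentially sound: the forward direction is the correct data-processing/composition argument ($\Acal_\kappa \subseteq \Acal_\mu$ under $\kappa = \lambda\circ\mu$), and the converse correctly exploits convexity and compactness of $\Lambda(\mu)$ plus the observation that with $\Acal=\Ycal$ the set of achievable output distributions is exactly $\Lambda(\mu)$, which turns a strict separating functional into a decision problem with a risk gap.

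One point worth tightening: in Definition~\ref{def:preorder1} the prior $\pi_S$ is the actual marginal of $S$ under $P$, so you cannot simply replace it by the uniform distribution. The cleaner move, which you gesture at, is to use the paper's standing assumption that $\pi_S$ has full support and absorb it into the loss by setting $\ell(s,y) := c(s,y)/\pi_S(s)$, so that $\pi_S(s)\ell(s,y) = c(s,y)$ and the same display holds verbatim with the given prior. This is exactly what the full-support hypothesis buys (and why the paper states $\mu\uge_{\Scal}\kappa \iff Z\uge_S Y$ only under full support). With that substitution, your proof is complete and correct.

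Finally, note that your easy direction implicitly uses the channel characterization $\kappa = \lambda\circ\mu$ of $Z\mge_S Y$; passing from the random-variable Definition~\ref{def:preorder2} to this channel form also relies on full support of $\pi_S$ (otherwise $\kappa_s$ is unconstrained on null states and the identity $\kappa = \lambda\circ\mu$ need not hold for any single $\lambda$). It would be worth a one-line remark so the reader sees where that hypothesis enters on both sides of the equivalence.
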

Theorem~\ref{thm:BW53} is a version of the Blackwell's theorem for random variables. 
We say~$\mu$ is \emph{Blackwell sufficient} for~$\kappa$ and write~$\mu\uge_{\Scal} \kappa$ if~$\kappa=\lambda\circ\mu$ for some~$\lambda\in\mathsf{M}(\Zcal;\Ycal)$. 
If~$\pi_S$ has full support, then~$\mu\uge_{\Scal} \kappa\iff Z\uge_{S} Y$~\cite[Theorem 4]{BlackwellISIT}.
In this setting, we are motivated to make the following definition.

\begin{definition}\label{def:UI0} 
	$Y$~\emph{has no unique information} about~$S$ w.r.t.~$Z$ $\logeq$~$Z\mge_{S} Y$.
\end{definition}

Definition~\ref{def:UI0} gives an operational idea when the unique information vanishes~\cite{e16042161}. The converse to the Blackwell's theorem states that if the relation~$Z\mge_{S} Y$ (resp.,~$Y\mge_{S} Z$) does not hold, then there exists a loss function and a set of actions that renders~$Y$ (resp.,~$Z$) more useful. This statement motivates the following definition~\cite{e16042161}.
\begin{definition}\label{def:UIg0}
	$Y$~\emph{has unique information} about~$S$ w.r.t.~$Z$ if there exists a set of actions~$\Acal$ and a loss function~$\ell(s,a)\in\Rb^{\Scal\times\Acal}$ such that~$R(\pi_S,\kappa,\ell)<R(\pi_S,\mu,\ell)$.
\end{definition}

The relation~$\mge_S$ is a preorder on observed variables. In general, one cannot expect two random variables to be comparable, i.e., one can always be simulated by a randomization of the other. On the contrary, most random variables are uncomparable. 
Lucien Le Cam introduced the notion of \emph{deficiencies}~\cite{lecam} and considerably augmented the scope of the Blackwell ordering. Deficiencies measure the cost of approximating one observed variable from the other (and vice versa) via Markov kernels. Maxim Raginsky~\cite{raginsky2011} introduced a broad class of deficiency-like quantities using the notion of a generalized divergence between probability distributions that satisfies a data processing inequality. 
In a spirit similar to~\cite{raginsky2011} and~\cite[Section 6.2]{torgersen}, when the distribution of the common input to the channels is fixed, one can define a \emph{weighted deficiency}. 
\begin{definition}\label{def:gdefo}
	The \emph{weighted output deficiency of~$\mu$ w.r.t.~$\kappa$} is
		\begin{align}\label{eq:weighted_gdefo} 
		\delta_o^{\pi}(\mu,\kappa):=\min_{\lambda\in\mathsf{M}(\Zcal;\Ycal)}D(\kappa\|\lambda\circ\mu|\pi_S),
		\end{align}
		where~$D$ is the Kullback-Leibler divergence and the subscript~$o$ in~$\delta_o^{\pi}$ emphasizes the fact that the randomization is at the \emph{output} of the channel~$\mu$ (see Fig.~\ref{fig:1}). 
\end{definition}
As an immediate consequence, $\delta_o^{\pi}(\mu,\kappa)=0$ if and only if $Z\mge_{S} Y$, which captures the intuition that if~$\delta_o^{\pi}(\mu,\kappa)$ is small, then~$Z$ is \emph{approximately Blackwell sufficient} for~$Y$. 

Le Cam's \emph{randomization criterion}~\cite{lecam} shows that deficiencies quantify the maximal gap in the optimal risks of decision problems when using the channel~$\mu$ rather than~$\kappa$. The next proposition states that bounding the weighted output deficiency is sufficient to ensure that the differences in the optimal risks is also bounded for any decision problem of interest.
\begin{proposition} \label{prop:lecam_suffbound}
	Fix~$\mu\in\mathsf{M}(\Scal;\Zcal)$, $\kappa\in\mathsf{M}(\Scal;\Ycal)$ and a prior probability distribution~$\pi_S$ on~$\Scal$ and write~$\norm{\ell}_\infty=\max_{s,a}\ell(s,a)$. For every~$\epsilon > 0$, if~$\delta_o^{\pi}(\mu,\kappa)\le\epsilon$, then~$R(\pi_S,\mu,\ell)-R(\pi_S,\kappa,\ell)\le\sqrt{\epsilon\tfrac{\lnn{2}}{2}}\norm{\ell}_\infty$ for any set of actions~$\Acal$ and any bounded loss function~$\ell$.
\end{proposition}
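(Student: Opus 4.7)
The plan is to use the minimizing $\lambda^{*}$ in the definition of $\delta_o^\pi(\mu,\kappa)$ to convert an optimal strategy for $\kappa$ into a feasible strategy for $\mu$, and then control the resulting gap in expected loss via Pinsker's inequality together with the data processing inequality.

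First, I would verify that a minimizer $\lambda^{*}\in\mathsf{M}(\Zcal;\Ycal)$ in~\eqref{eq:weighted_gdefo} exists: $\mathsf{M}(\Zcal;\Ycal)$ is compact, $\lambda\mapsto\lambda\circ\mu$ is continuous, and $\nu\mapsto D(\kappa\|\nu|\pi_S)$ is lower semi-continuous, so the infimum is attained (and finite, since we are given $\delta_o^\pi(\mu,\kappa)\le\epsilon$). Set $\kappa':=\lambda^{*}\circ\mu\in\mathsf{M}(\Scal;\Ycal)$, so $D(\kappa\|\kappa'|\pi_S)\le\epsilon$. Now let $\rho^{\dagger}\in\mathsf{M}(\Ycal;\Acal)$ be an optimal (deterministic) strategy for $\kappa$, so that $R(\pi_S,\kappa,\ell)=\sum_{s}\pi_S(s)\sum_{a}(\rho^{\dagger}\!\circ\kappa_s)(a)\,\ell(s,a)$. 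Then $\rho^{\dagger}\circ\lambda^{*}\in\mathsf{M}(\Zcal;\Acal)$, so $\rho^{\dagger}\circ\kappa'\in\Acal_{\mu}$ is feasible when using $\mu$, which gives
\begin{equation*}
R(\pi_S,\mu,\ell)-R(\pi_S,\kappa,\ell)\;\le\;\sum_{s\in\Scal}\pi_S(s)\sum_{a\in\Acal}\bigl[(\rho^{\dagger}\!\circ\kappa'_s)(a)-(\rho^{\dagger}\!\circ\kappa_s)(a)\bigr]\ell(s,a).
\end{equation*}

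Second, I would bound the inner sum for each $s$. For any two probability measures $\tau,\tau'$ on $\Acal$ and any $f\colon\Acal\to[0,M]$, the elementary estimate $\sum_{a}(\tau'(a)-\tau(a))f(a)\le M\cdot\lVert\tau-\tau'\rVert_{\mathrm{TV}}$ holds (integrate only over the set where $\tau'>\tau$). Applied with $\tau=\rho^{\dagger}\!\circ\kappa_s$, $\tau'=\rho^{\dagger}\!\circ\kappa'_s$, and $M=\norm{\ell}_\infty$, this yields
\begin{equation*}
R(\pi_S,\mu,\ell)-R(\pi_S,\kappa,\ell)\;\le\;\norm{\ell}_\infty\sum_{s\in\Scal}\pi_S(s)\,\bigl\lVert\rho^{\dagger}\!\circ\kappa_s-\rho^{\dagger}\!\circ\kappa'_s\bigr\rVert_{\mathrm{TV}}.
\end{equation*}
Applying Pinsker's inequality (in bits) $\lVert\tau-\tau'\rVert_{\mathrm{TV}}\le\sqrt{\tfrac{\ln 2}{2}D(\tau\|\tau')}$, followed by the data processing inequality $D(\rho^{\dagger}\!\circ\kappa_s\|\rho^{\dagger}\!\circ\kappa'_s)\le D(\kappa_s\|\kappa'_s)$, and then Jensen's inequality for the concave map $\sqrt{\cdot}$ to pull the $\pi_S$-expectation inside the root, I obtain
\begin{equation*}
R(\pi_S,\mu,\ell)-R(\pi_S,\kappa,\ell)\;\le\;\norm{\ell}_\infty\sqrt{\tfrac{\lnn{2}}{2}\,D(\kappa\|\kappa'\,|\,\pi_S)}\;\le\;\sqrt{\epsilon\tfrac{\lnn{2}}{2}}\,\norm{\ell}_\infty,
\end{equation*}
which is the desired bound.

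There is no real obstacle in this argument; it is a textbook-style Le Cam deficiency estimate. The two points where some care is needed are (i)~asserting existence of the minimizer $\lambda^{*}$ (compactness plus lower semi-continuity of KL), and (ii)~keeping track of logarithm bases so that Pinsker's constant $\tfrac{\ln 2}{2}$ matches the $\lnn{2}$ appearing in the stated bound. The DPI step and the Jensen step are both routine, and the reduction from $\mu$ to $\kappa'$ is the only conceptual ingredient: it is precisely the feature that makes the weighted output deficiency the "right" quantity to control differences of optimal risks.
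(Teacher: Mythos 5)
Your proof is correct and follows essentially the same route as the paper's: pick the minimizing $\lambda^{*}$, compose it with an optimal strategy for $\kappa$ to get a feasible strategy for $\mu$, bound the resulting gap by total variation, and then pass to the weighted deficiency via Pinsker's inequality, the data processing inequality, and Jensen. The only cosmetic difference is the order of the last two reductions: you apply Pinsker first and then DPI for KL divergence, whereas the paper applies DPI for TV first and then Pinsker; both orderings yield the identical chain of estimates.
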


Another ordering that has been studied recently is the input-degradedness preorder~\cite{nasser2017} based on whether one channel can be simulated from the other by randomization at the input. 
\begin{definition}\label{def:preorder3}
	Let $\bar{\kappa}\in\mathsf{M}(\Ycal;\Scal)$, $\bar{\mu}\in\mathsf{M}(\Zcal;\Scal)$ be two channels with a common output alphabet. We say that \emph{$\bar{\kappa}$ is input-degraded from~$\bar{\mu}$} and write~$\bar{\mu}\succeq_{\Scal} \bar{\kappa}$ if~$\bar{\kappa}=\bar{\mu}\circ\bar{\lambda}$ for some~$\bar{\lambda}\in\mathsf{M}(\Ycal;\Zcal)$.
\end{definition}
\cite{nasser2017} gave a characterization of input-degradedness that is similar to Blackwell's theorem. The weighted deficiency counterpart of Definition~\ref{def:gdefo} is as follows.
\begin{definition}\label{def:gdefi}
	The \emph{weighted input deficiency of~$\bar{\mu}$ w.r.t.~$\bar{\kappa}$ is}
		\begin{align}\label{eq:weighted_gdefi}
		\delta_i^{\pi}(\bar{\mu},\bar{\kappa}):=\min_{\bar{\lambda}\in\mathsf{M}(\Ycal;\Zcal)}D(\bar{\kappa}\|\bar{\mu}\circ\bar{\lambda}|\pi_Y),
		\end{align}
		where the subscript~$i$ in~$\delta_i^{\pi}$ emphasizes the fact that the randomization is at the \emph{input} of the channel~$\bar{\mu}$ (see Fig.~\ref{fig:2}).
\end{definition}

\vspace{.5cm}
\section{Nonnegative mutual information decompositions}
We propose two nonnegative decompositions of the mutual information between the pair~$(Y,Z)$ and~$S$ based on Definition~\ref{def:gdefo} and Definition~\ref{def:gdefi} of the weighted output and input deficiencies. 
\subsection{Nonnegative decomposition based on weighted output deficiencies}
Consider the following functions on the simplex~$\mathbb{P}_{\Scal\times\Ycal\times\Zcal}$.
\begin{definition}\label{def:decomp_gdefo}
	Let~$(S,Y,Z)\sim P$.
	\begin{subequations}
		\label{subeq:decomp_gdefo}
		\begin{align}
		UI_o(S;Y\backslash Z) &= \max\{\delta_o^{\pi}({\mu},{\kappa}),\delta_o^{\pi}({\kappa},{\mu})+I(S;Y)-I(S;Z)\},\label{subeq:UI_oy}\\
		UI_o(S;Z\backslash Y) &= \max\{\delta_o^{\pi}({\kappa},{\mu}),\delta_o^{\pi}({\mu},{\kappa})+I(S;Z)-I(S;Y)\},\label{subeq:UI_oz}\\
		SI_o(S;Y,Z) &=\min\{I(S;Y)-\delta_o^{\pi}(\mu,\kappa), I(S;Z)-\delta_o^{\pi}(\kappa,\mu)\},\label{subeq:SI_o}\\
		CI_o(S;Y,Z) &= \min\{I(S;Y|Z)-\delta_o^{\pi}(\mu,\kappa), I(S;Z|Y)-\delta_o^{\pi}(\kappa,\mu)\}.\label{subeq:CI_o}
		\end{align}
	\end{subequations}
\end{definition}
\begin{remark}
	The functions~$UI_o$ and~$SI_o$ depend only on the triple~$(\pi,{\kappa},{\mu})$. The function~$CI_o$ depends on the full joint~$P$.
\end{remark}
It is easy to check that the functions~\eqref{subeq:decomp_gdefo} satisfy the information decomposition equations~\eqref{eq:MIdec1}-\eqref{eq:MIdec3}. 
\begin{proposition} [Nonnegativity] \label{lem:positivity_gdefo}
	$SI_o$, $UI_o$ and $CI_o$ are nonnegative functions.
\end{proposition}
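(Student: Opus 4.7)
The plan exploits the fact that, after expanding the $\max$ in~(\ref{subeq:UI_oy}) and using the identity $I(S;Y|Z)-I(S;Y)=I(S;Z|Y)-I(S;Z)$, we obtain
\[
SI_o(S;Y,Z) = I(S;Y) - UI_o(S;Y\backslash Z),\qquad CI_o(S;Y,Z) = I(S;Y|Z) - UI_o(S;Y\backslash Z).
\]
Nonnegativity therefore reduces to the four inequalities
\[
\delta_o^{\pi}(\mu,\kappa)\leq I(S;Y),\quad \delta_o^{\pi}(\kappa,\mu)\leq I(S;Z),\quad \delta_o^{\pi}(\mu,\kappa)\leq I(S;Y|Z),\quad \delta_o^{\pi}(\kappa,\mu)\leq I(S;Z|Y).
\]
Nonnegativity of $UI_o$ itself is immediate since the $\max$ in~(\ref{subeq:UI_oy}) is bounded below by the nonnegative KL quantity $\delta_o^{\pi}(\mu,\kappa)\geq 0$.

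For the first two bounds, I would exhibit a specific feasible $\lambda$ in~(\ref{eq:weighted_gdefo}): take the constant ``garbage'' channel $\lambda(y|z):=P_Y(y)$, independent of $z$. Then $(\lambda\circ\mu)(y|s)=P_Y(y)$ for every $s$, and
\[
D(\kappa\|\lambda\circ\mu|\pi_S) = \sum_{s}\pi_S(s)\,D(P_{Y|S}(\cdot|s)\|P_Y) = I(S;Y),
\]
so the minimum in~(\ref{eq:weighted_gdefo}) does not exceed $I(S;Y)$. The bound for $\delta_o^{\pi}(\kappa,\mu)$ is obtained symmetrically.

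For the remaining two bounds I would instead choose $\lambda(y|z):=P_{Y|Z}(y|z)$, read off the full joint $P_{SYZ}$. Then $(\lambda\circ\mu)(y|s)=\sum_{z} P_{Z|S}(z|s)P_{Y|Z}(y|z)$, while also $P_{Y|S}(y|s)=\sum_{z} P_{Z|S}(z|s)P_{Y|SZ}(y|s,z)$. Both distributions are mixtures over $z$ with the \emph{same} weights $P_{Z|S}(z|s)$, so joint convexity of the Kullback--Leibler divergence yields
\[
D(P_{Y|S}(\cdot|s)\|(\lambda\circ\mu)(\cdot|s)) \leq \sum_{z} P_{Z|S}(z|s)\,D(P_{Y|SZ}(\cdot|s,z)\|P_{Y|Z}(\cdot|z)).
\]
Averaging against $\pi_S(s)$ collapses the right-hand side to exactly $I(S;Y|Z)$, giving $\delta_o^{\pi}(\mu,\kappa)\leq I(S;Y|Z)$; the symmetric bound uses $\bar\lambda(z|y):=P_{Z|Y}(z|y)$. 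The only nontrivial step is guessing the right $\lambda$ in each case; thereafter the arguments are routine from standard KL identities and joint convexity, so I do not anticipate any substantial obstacle.
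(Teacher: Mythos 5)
Your proof is correct, and it is essentially the paper's argument with a slightly different bookkeeping. The reduction $SI_o = I(S;Y) - UI_o(S;Y\backslash Z)$ and $CI_o = I(S;Y|Z) - UI_o(S;Y\backslash Z)$ is just equations~\eqref{eq:MIdec1} and~\eqref{eq:MIdec2} rearranged, and your four inequalities are exactly the ones the paper establishes implicitly. The paper case-splits on which term achieves the $\max$ in $UI_o$, fixes the optimal kernel $\lambda^{\ast}$, and then (i) shows $I(S;Y|Z) \ge \delta_o^{\pi}(\mu,\kappa)$ by pushing the outer sum over $z$ inside the divergence (convexity of $D$) and then invoking optimality of $\lambda^{\ast}$ against the candidate $P_{Y|Z}$, and (ii) shows $I(S;Y) \ge \delta_o^{\pi}(\mu,\kappa)$ by comparing $\lambda^{\ast}$ against the constant channel $P_Y$. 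Your version skips the case split, plugs the two candidate kernels ($P_{Y|Z}$ and $P_Y$) directly into the minimum defining $\delta_o^{\pi}$, and reads off the same two bounds plus their symmetric counterparts — same ideas, cleaner presentation. One cosmetic slip: for the bound $\delta_o^{\pi}(\kappa,\mu)\le I(S;Z|Y)$ the candidate is a kernel in $\mathsf{M}(\Ycal;\Zcal)$ to be composed \emph{after} $\kappa$, i.e.\ $\lambda'(z|y) = P_{Z|Y}(z|y)$; your $\bar\lambda$ notation suggests a pre-channel at the input, which is not what is meant here, but the underlying computation is the right one.
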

\begin{lemma}
	\label{lem:UIo-zero}
	$UI_{o}(S;Y\backslash Z)$ vanishes if and only if $Y$ has no unique information about~$S$ w.r.t.~$Z$ (according to
	Definition~\ref{def:UI0}).
\end{lemma}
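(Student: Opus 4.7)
The plan is to prove both directions of the biconditional separately, with one direction essentially immediate from the definitions and the other requiring an explicit construction of a reverse kernel.

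For the forward direction, I would argue that if $UI_o(S;Y\backslash Z)=0$, then since $UI_o$ is a maximum of two terms and the first of them, $\delta_o^\pi(\mu,\kappa)$, is itself the infimum of nonnegative KL divergences and hence $\ge 0$, it must be the case that $\delta_o^\pi(\mu,\kappa)=0$. The remark immediately following Definition~\ref{def:gdefo} identifies this with $Z\mge_S Y$, which is exactly the condition in Definition~\ref{def:UI0}.

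The converse is where the real work lies. Assume $Z\mge_S Y$, so $\kappa=\lambda\circ\mu$ for some $\lambda\in\mathsf{M}(\Zcal;\Ycal)$; then $\delta_o^\pi(\mu,\kappa)=0$ is immediate from \eqref{eq:weighted_gdefo}. What remains is to show that the second entry of the max is nonpositive, i.e.\ $\delta_o^\pi(\kappa,\mu)\le I(S;Z)-I(S;Y)$. The plan is to exhibit an explicit witness $\lambda'\in\mathsf{M}(\Ycal;\Zcal)$ in \eqref{eq:weighted_gdefo} and bound the resulting divergence. I would take the Bayes reverse: let $Q$ be the joint law of $(S,Y,Z)$ under the Markov chain $S\to Z\to Y$ built from $\pi_S$, $\mu$ and $\lambda$, and set $\lambda'(z|y):=Q(Z=z|Y=y)$. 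Let $\tilde Q$ be the joint built from the reversed Markov chain $S\to Y\to Z$ using kernels $\kappa$ and $\lambda'$, i.e.\ $\tilde Q(s,y,z)=\pi_S(s)\kappa_s(y)\lambda'(z|y)$. A direct check shows that $\tilde Q$ and $Q$ share the bivariate marginals on $(S,Y)$ and on $(Y,Z)$, and that $(\lambda'\circ\kappa)_s(z)=\tilde Q(z|s)$, so $\delta_o^\pi(\kappa,\mu)\le D(\mu\,\|\,\lambda'\circ\kappa\,|\,\pi_S)=D(Q_{SZ}\|\tilde Q_{SZ})$.

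From here the standard data-processing inequality gives $D(Q_{SZ}\|\tilde Q_{SZ})\le D(Q_{SYZ}\|\tilde Q_{SYZ})$, so it remains to identify the three-variable divergence. Using the chain rule for KL together with the coincidence $Q_{SY}=\tilde Q_{SY}$ and the identity $\tilde Q(z|s,y)=Q(z|y)$ (the latter because $\tilde Q$ has Markov structure $S\to Y\to Z$), one obtains $D(Q_{SYZ}\|\tilde Q_{SYZ})=I_Q(S;Z|Y)$, which collapses to $I(S;Z)-I(S;Y)$ because $S-Z-Y$ is a Markov chain under $Q$ (so $I_Q(S;Y|Z)=0$) and the chain rule for mutual information then gives $I_Q(S;Z|Y)=I(S;Z)-I(S;Y)$. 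The main obstacle is conceptual rather than technical: recognizing that the Bayes reverse of $\lambda$ is the right witness to use in \eqref{eq:weighted_gdefo}, and noticing that the resulting three-variable KL reduces exactly to $I(S;Z|Y)$. Once those two observations are in hand, the rest is bookkeeping with marginals and the chain rule.
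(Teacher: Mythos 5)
Your proof is correct, but the two directions are not equally close to the paper's argument. The direction ``$UI_o(S;Y\backslash Z)=0 \Rightarrow$ no unique information'' is precisely what the paper does: extract $\delta_o^\pi(\mu,\kappa)=0$ from the max and invoke the equivalence $\delta_o^\pi(\mu,\kappa)=0 \iff Z\mge_S Y$. For the other direction the paper takes a shortcut that you do not: it appeals to Lemma~\ref{lem:UI-zero} (the already-established fact that $UI(S;Y\backslash Z)=0$ iff $Z\mge_S Y$) and then to Lemma~\ref{lem:minsyn} (which gives $UI_o \le UI$ since $UI_o$ depends only on the marginal triple $(\pi,\kappa,\mu)$), so that nonnegativity of $UI_o$ forces $UI_o=0$. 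You instead prove the needed inequality $\delta_o^\pi(\kappa,\mu) \le I(S;Z)-I(S;Y)$ directly, by building the Markov coupling $Q(s,z,y)=\pi_S(s)\mu_s(z)\lambda(y|z)$, taking the Bayes reverse $\lambda'(z|y)=Q(z|y)$ as an explicit feasible kernel, and reducing the resulting KL term to $I_Q(S;Z|Y)$ via data processing and the chain rule. Both arguments are sound. The paper's route is shorter and leans on machinery proved elsewhere in the paper; your route is self-contained and makes visible, via the explicit witness $\lambda'$, exactly where the bound comes from --- it is essentially the same convexity/DPI computation used in the proof of Proposition~\ref{lem:positivity_gdefo}, applied to the auxiliary law $Q\in\Delta_P$ whose Markov structure collapses $I_Q(S;Z|Y)$ to $I(S;Z)-I(S;Y)$. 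One small thing worth making explicit in a writeup is that $Q$ really lies in $\Delta_P$, i.e.\ that $Q_{SY}=P_{SY}$ uses $\kappa=\lambda\circ\mu$ and $Q_{SZ}=P_{SZ}$ is by construction; you state the marginal coincidences but the reader should see why the $(S,Y)$ marginal matches $P$ and not just $\tilde Q$.
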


\subsection{Nonnegative decomposition based on weighted input deficiencies}
Consider the following functions on the simplex~$\mathbb{P}_{\Scal\times\Ycal\times\Zcal}$.
\begin{definition}\label{def:decomp_gdefi}
	Let~$(S,Y,Z)\sim P$.
	\begin{subequations}
		\label{subeq:decomp_gdefi}
		\begin{align}
		UI_i(S;Y\backslash Z) &= \max\{\delta_i^{\pi}(\bar{\mu},\bar{\kappa}),\delta_i^{\pi}(\bar{\kappa},\bar{\mu})+I(S;Y)-I(S;Z)\},\label{subeq:UI_iy}\\
		UI_i(S;Z\backslash Y) &= \max\{\delta_i^{\pi}(\bar{\kappa},\bar{\mu}),\delta_i^{\pi}(\bar{\mu},\bar{\kappa})+I(S;Z)-I(S;Y)\},\label{subeq:UI_iz}\\
		SI_i(S;Y,Z) &=\min\{I(S;Y)-\delta_i^{\pi}(\bar{\mu},\bar{\kappa}),I(S;Z)-\delta_i^{\pi}(\bar{\kappa},\bar{\mu})\},\label{subeq:SI_i}\\
		CI_i(S;Y,Z) &= \min\{I(S;Y|Z)-\delta_i^{\pi}(\bar{\mu},\bar{\kappa}),I(S;Z|Y)-\delta_i^{\pi}(\bar{\kappa},\bar{\mu})\}.\label{subeq:CI_i}
		\end{align}
	\end{subequations}
\end{definition}
\begin{remark}
	The functions~$UI_i$ and~$SI_i$ depend only on the tuple~$(\pi_Y,\pi_Z,\bar{\kappa},\bar{\mu})$. $CI_i$ depends on the full joint~$P$.
\end{remark}
It is easy to see that the functions~\eqref{subeq:decomp_gdefo} satisfy the information decomposition equations~\eqref{eq:MIdec1}-\eqref{eq:MIdec3}. 
\begin{proposition} [Nonnegativity]\label{lem:positivity_gdefi}
	$SI_i$, $UI_i$ and $CI_i$ are nonnegative functions.
\end{proposition}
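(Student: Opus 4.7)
The plan is to reduce every nonnegativity claim to an upper bound on one of the two weighted input deficiencies. Nonnegativity of $UI_i$ is trivial since $\delta_i^{\pi}\ge0$ as a KL divergence, so both terms inside the $\max$ are bounded below by something that makes $UI_i \ge \delta_i^{\pi}(\bar\mu,\bar\kappa)\ge 0$ (for the $Z\backslash Y$ version one argues symmetrically). Next, observe that $CI_i$ can be rewritten in the form $I(S;Y|Z)-UI_i(S;Y\backslash Z)$ (this follows from the identity $I(S;Y|Z)-I(S;Y)+I(S;Z)=I(S;Z|Y)$ and a routine rearrangement using the definition of $UI_i$). Hence, to verify $SI_i,CI_i\ge0$ it suffices to establish the two inequalities
\begin{equation*}
\delta_i^{\pi}(\bar\mu,\bar\kappa)\le I(S;Y), \qquad \delta_i^{\pi}(\bar\mu,\bar\kappa)\le I(S;Y|Z),
\end{equation*}
together with their analogues obtained by swapping $Y\leftrightarrow Z$ (equivalently, $\bar\kappa\leftrightarrow\bar\mu$). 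Each one amounts to exhibiting a single choice of $\bar\lambda\in\mathsf{M}(\Ycal;\Zcal)$ achieving the stated bound.

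For the mutual-information bound, I would choose the constant channel $\bar\lambda_y(z):=\pi_Z(z)$ for every $y$. Since $\pi_Z(z)\bar\mu_z(s)=\pi_S(s)\mu_s(z)$, marginalizing over $z$ gives $(\bar\mu\circ\bar\lambda)_y(s)=\sum_z\pi_Z(z)\bar\mu_z(s)=\pi_S(s)$, independent of $y$. Substituting into the definition of the weighted input deficiency yields $D(\bar\kappa\,\|\,\bar\mu\circ\bar\lambda\,|\,\pi_Y)=\sum_y\pi_Y(y)\,D(\bar\kappa_y\,\|\,\pi_S)=I(S;Y)$, which implies $\delta_i^{\pi}(\bar\mu,\bar\kappa)\le I(S;Y)$ at once. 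This gives $SI_i\ge 0$.

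For the conditional bound (the main obstacle), I would plug in the \emph{data-driven} choice $\bar\lambda:=P_{Z|Y}$, so that $(\bar\mu\circ\bar\lambda)_y(s)=\sum_z P_{Z|Y}(z|y)\,P_{S|Z}(s|z)$. Writing $P_{S|Y}(s|y)=\sum_zP_{Z|Y}(z|y)\,P_{S|YZ}(s|y,z)$ and applying the log-sum inequality with $a_z=P_{Z|Y}(z|y)P_{S|YZ}(s|y,z)$ and $b_z=P_{Z|Y}(z|y)P_{S|Z}(s|z)$ gives, pointwise in $y,s$,
\begin{equation*}
P_{S|Y}(s|y)\log\frac{P_{S|Y}(s|y)}{(\bar\mu\circ\bar\lambda)_y(s)}\le\sum_zP(z|y)P_{S|YZ}(s|y,z)\log\frac{P_{S|YZ}(s|y,z)}{P_{S|Z}(s|z)}.
\end{equation*}
Summing over $s$ and taking the expectation over $y\sim\pi_Y$ turns the left-hand side into $D(\bar\kappa\,\|\,\bar\mu\circ P_{Z|Y}\,|\,\pi_Y)$ and the right-hand side into $I(S;Y|Z)$, giving $\delta_i^{\pi}(\bar\mu,\bar\kappa)\le I(S;Y|Z)$.

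Finally, by symmetry of the constructions under $Y\leftrightarrow Z$, the same two arguments yield $\delta_i^{\pi}(\bar\kappa,\bar\mu)\le I(S;Z)$ and $\delta_i^{\pi}(\bar\kappa,\bar\mu)\le I(S;Z|Y)$. Plugging these four inequalities into \eqref{subeq:SI_i} and \eqref{subeq:CI_i} shows that each $\min$ in the definitions of $SI_i$ and $CI_i$ is taken over nonnegative quantities, completing the proof. The hard part is really the conditional bound: the right ansatz is $\bar\lambda=P_{Z|Y}$, and the log-sum inequality is precisely what converts the resulting KL term into the conditional mutual information $I(S;Y|Z)$.
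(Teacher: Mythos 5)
Your proof is correct and follows essentially the same route as the paper's argument: the paper only writes ``similar to Proposition~\ref{lem:positivity_gdefo} and omitted,'' and the two inequalities you establish---$\delta_i^{\pi}(\bar\mu,\bar\kappa)\le I(S;Y)$ via the constant kernel $\bar\lambda_y=\pi_Z$, and $\delta_i^{\pi}(\bar\mu,\bar\kappa)\le I(S;Y|Z)$ via $\bar\lambda=P_{Z|Y}$ and the log-sum inequality---are precisely the input-deficiency counterparts of the paper's use of the constant kernel $P_Y$ and of KL-convexity in the proof of Proposition~\ref{lem:positivity_gdefo}. The only (minor, cosmetic) difference is that you bound all four terms appearing in the $\min$'s at once and thereby dispense with the case split on which branch of the $\max$/$\min$ is active.
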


\addtocounter{theorem}{1}

\cite{HarderSalgePolani2013:Bivariate_redundancy} defined a measure of \emph{shared information} based on reverse information projections~\cite{csiszarIproj} to a convex set of probability measures.
\begin{definition} \label{def:SIred}
	    For~$C\subset\mathbb{P}_{\Scal}$, let~$\conv(C)$ denote the convex hull of~$C$. 
		Let $$Q_{y\searrow Z}(S)\in \argmin_{Q\in\conv\left(\{\bar{\mu}_z\}_{z\in\Zcal}\right)\subset\mathbb{P}_{\Scal}} D(\bar{\kappa}_y\|Q)$$ be the \emph{reverse I-projection} of~$\bar{\kappa}_y$ onto the convex hull of the points~$\{\bar{\mu}_z\}_{z\in\Zcal}\in\mathbb{P}_{\Scal}$. Define the \emph{projected information of~$Y$ onto~$Z$ w.r.t. S} as
		\begin{align}
		I_{S}(Y\searrow Z):=\mathbb{E}_{(s,y)\sim\bar{\kappa}\times\pi_Y} \log \tfrac{Q_{y\searrow Z}(s)}{\bar{\kappa}\circ\pi_Y(s)},
		\end{align} and the shared information
		\begin{align}
		SI_{red}(S;Y,Z):=\min\{I_{S}(Y\searrow Z),I_{S}(Z\searrow Y)\} \label{eq:SI_red}.
		\end{align}
\end{definition}
For an account of some intuitive properties of the function~$SI_{red}$ as a measure of shared information, see~\cite[Section II.B]{HarderSalgePolani2013:Bivariate_redundancy} and~\cite{WilliamsBeer}.

Proposition~\ref{prop:UIred_equals_gdefi} states that implicit in the above construction is the weighted input deficiency~$\delta_i^{\pi}(\bar{\mu},\bar{\kappa})$. 
\begin{proposition} \label{prop:UIred_equals_gdefi}
	$I_{S}(Y\searrow Z)=I(S;Y)-\delta_i^{\pi}(\bar{\mu},\bar{\kappa})$.
\end{proposition}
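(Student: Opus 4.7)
The plan is to unfold both sides into sums over $y \in \Ycal$ of Kullback–Leibler terms and verify they coincide after a telescoping step. The key observation that drives everything is that in the minimization defining $\delta_i^\pi(\bar\mu,\bar\kappa)$, the choice of the post-channel $\bar\lambda \in \mathsf{M}(\Ycal;\Zcal)$ decouples across $y$, so the input-deficiency collapses into a pointwise reverse I-projection.

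First I would rewrite the conditional divergence in \eqref{eq:weighted_gdefi} as
\begin{align*}
D(\bar\kappa \,\|\, \bar\mu\circ\bar\lambda \,|\, \pi_Y) = \sum_{y\in\Ycal} \pi_Y(y)\, D\!\left(\bar\kappa_y \,\Big\|\, \sum_{z\in\Zcal}\bar\lambda_y(z)\,\bar\mu_z\right),
\end{align*}
and observe that $\bar\lambda_y$ can be selected independently from $\bar\lambda_{y'}$ when $y\neq y'$. As $\bar\lambda_y$ ranges over probability distributions on $\Zcal$, the mixture $\sum_z \bar\lambda_y(z)\bar\mu_z$ ranges over all of $\conv(\{\bar\mu_z\}_{z\in\Zcal})$. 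Hence the joint minimum factorizes and
\begin{align*}
\delta_i^\pi(\bar\mu,\bar\kappa) = \sum_{y\in\Ycal}\pi_Y(y)\min_{Q\in\conv(\{\bar\mu_z\}_{z\in\Zcal})} D(\bar\kappa_y \,\|\, Q) = \sum_{y\in\Ycal}\pi_Y(y)\, D(\bar\kappa_y \,\|\, Q_{y\searrow Z}),
\end{align*}
by the very definition of the reverse I-projection $Q_{y\searrow Z}$.

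Next I would expand $I(S;Y)$ along the same slicing. Since $\bar\kappa\circ\pi_Y = \pi_S$, a standard identity yields
\begin{align*}
I(S;Y) = \sum_{y\in\Ycal} \pi_Y(y)\, D(\bar\kappa_y \,\|\, \pi_S).
\end{align*}
Subtracting the previous display and combining the two KL divergences inside the sum over $s\in\Scal$, the $\log \bar\kappa_y(s)$ terms cancel and what remains is
\begin{align*}
I(S;Y) - \delta_i^\pi(\bar\mu,\bar\kappa) = \sum_{y\in\Ycal}\pi_Y(y)\sum_{s\in\Scal}\bar\kappa_y(s)\,\log\frac{Q_{y\searrow Z}(s)}{\pi_S(s)} = I_S(Y\searrow Z),
\end{align*}
matching the defining expression of $I_S(Y\searrow Z)$.

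There is no serious obstacle; the only point that deserves care is the decoupling step, where one must note that nothing forces $\bar\lambda_y$ to be the same across different $y$, so the minimization over channels reduces to pointwise minimization over $\Delta(\Zcal)$. Well-definedness of $Q_{y\searrow Z}$ is automatic because $\conv(\{\bar\mu_z\}_{z\in\Zcal})$ is compact and $Q\mapsto D(\bar\kappa_y\|Q)$ is lower semicontinuous, and the value of $I_S(Y\searrow Z)$ does not depend on which minimizer is chosen, since the integrand only involves $Q_{y\searrow Z}$ through the cross term.
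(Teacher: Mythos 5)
Your proof is correct and follows essentially the same route as the paper: the central observation in both is that the minimization over $\bar\lambda\in\mathsf{M}(\Ycal;\Zcal)$ in $\delta_i^{\pi}$ decouples across $y$ into pointwise reverse I-projections onto $\conv(\{\bar\mu_z\}_{z\in\Zcal})$, after which the identity reduces to a telescoping of KL divergences using $\bar\kappa\circ\pi_Y=\pi_S$. The paper simply packages the final bookkeeping as $I(S;Y)-I_S(Y\searrow Z)=D(\bar\kappa\|Q_{Y\searrow Z}|\pi_Y)$ rather than subtracting the two expansions term by term as you do, but the content is the same.
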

An immediate consequence of Proposition~\ref{prop:UIred_equals_gdefi} is that the decomposition proposed in~\cite{HarderSalgePolani2013:Bivariate_redundancy} and that in Definition~\ref{def:decomp_gdefi} are equivalent. 
\begin{proposition}\label{thm:equivalence_Ired_gdefi}
  \addtocounter{equation}{1}
$SI_{red}=SI_i$, $UI_{red}=UI_i$, $CI_{red}=CI_i$,
where~$UI_{red}$ and~$CI_{red}$ are the corresponding unique and complementary informations derived from~\eqref{eq:SI_red} and satisfying equations~\eqref{eq:MIdec1}-\eqref{eq:MIdec3}.
\end{proposition}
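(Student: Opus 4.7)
The plan is to reduce the whole statement to a direct algebraic consequence of Proposition~\ref{prop:UIred_equals_gdefi}, using only the decomposition identities \eqref{eq:MIdec1}--\eqref{eq:MIdec3}. Concretely, applying Proposition~\ref{prop:UIred_equals_gdefi} in both orderings gives $I_S(Y\searrow Z)=I(S;Y)-\delta_i^{\pi}(\bar{\mu},\bar{\kappa})$ and, by symmetry of the definition, $I_S(Z\searrow Y)=I(S;Z)-\delta_i^{\pi}(\bar{\kappa},\bar{\mu})$. Substituting these into \eqref{eq:SI_red} immediately yields
\[
SI_{red}(S;Y,Z)=\min\{I(S;Y)-\delta_i^{\pi}(\bar{\mu},\bar{\kappa}),\;I(S;Z)-\delta_i^{\pi}(\bar{\kappa},\bar{\mu})\},
\]
which is exactly \eqref{subeq:SI_i}. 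So the shared-information equality $SI_{red}=SI_i$ falls out with essentially no work.

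Next, I would derive $UI_{red}$ and $CI_{red}$ from $SI_{red}$ via the consistency relations that any decomposition satisfying \eqref{eq:MIdec1}--\eqref{eq:MIdec3} must obey. From \eqref{eq:MIdec1}, $UI_{red}(S;Y\backslash Z)=I(S;Y)-SI_{red}(S;Y,Z)$. Pushing the $I(S;Y)$ inside the minimum and flipping it to a maximum gives
\[
UI_{red}(S;Y\backslash Z)=\max\{\delta_i^{\pi}(\bar{\mu},\bar{\kappa}),\;\delta_i^{\pi}(\bar{\kappa},\bar{\mu})+I(S;Y)-I(S;Z)\},
\]
which matches \eqref{subeq:UI_iy}. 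The same manipulation with the roles of $Y$ and $Z$ swapped yields \eqref{subeq:UI_iz}. Finally, using \eqref{eq:MIdec2} in the form $CI_{red}(S;Y,Z)=I(S;Y|Z)-UI_{red}(S;Y\backslash Z)$ and substituting the expression just obtained, together with the identity $I(S;Y|Z)-I(S;Y)=I(S;Z|Y)-I(S;Z)$, reproduces \eqref{subeq:CI_i} after the same min/max interchange.

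I expect no real obstacle here: once Proposition~\ref{prop:UIred_equals_gdefi} is in hand, the entire proposition is a sequence of substitutions plus the elementary identity $c-\min\{a,b\}=\max\{c-a,c-b\}$. The only place that warrants a brief sanity check is the $CI$ step, where one must confirm that both arguments of the min in \eqref{subeq:CI_i} arise correctly from the two arguments of the max in $UI_{red}$, i.e.\ that the chain-rule shift $I(S;Y|Z)-I(S;Y)=I(S;Z|Y)-I(S;Z)$ is applied consistently. Beyond that, nonnegativity of each of $SI_{red},UI_{red},CI_{red}$ is already guaranteed by Proposition~\ref{lem:positivity_gdefi} via the identification, so no separate verification is required.
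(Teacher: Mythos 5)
Your proof is correct and follows exactly the route the paper intends: the paper gives no separate proof for this proposition, stating only that it is "an immediate consequence of Proposition~\ref{prop:UIred_equals_gdefi}," and your substitutions plus the elementary min/max duality and the chain-rule identity $I(S;Y|Z)-I(S;Y)=I(S;Z|Y)-I(S;Z)$ are precisely the steps that consequence amounts to.
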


\vspace{.5cm}
\section{Minimum-synergy unique information}
\cite{e16042161} proposed a nonnegative decomposition of the mutual information based on the idea that the unique and shared information should depend only on the marginal distributions of the pairs~$(S,Y)$ and~$(S,Z)$. 
\begin{definition}\label{def:decomp_minsyn}
	Let~$(S,Y,Z)\sim P$ and let~${\kappa}\in\mathsf{M}(\Scal;\Ycal)$, ${\mu}\in\mathsf{M}(\Scal;\Zcal)$ be two channels with the same input alphabet such that~$P_{SY}(s,y) = \pi_S(s){\kappa}_s(y)$ and~$P_{SZ}(s,z) = \pi_S(s){\mu}_s(z)$. Define 
		\begin{subequations}
			\label{subeq:decomp_minsyn}
			\begin{align}
			\Delta_P = \big\{Q \in \mathbb{P}_{\Scal\times\Ycal\times\Zcal}\colon &Q_{SY}(s,y)=\pi_S(s)\kappa_s(y),\notag\\
			&Q_{SZ}(s,z)=\pi_S(s)\mu_s(z)\big\},\label{subeq:delP}\\
			UI(S;Y\backslash Z) &= \min_{Q \in \Delta_P} I_Q(S;Y|Z),\label{subeq:UIy}\\
			UI(S;Z\backslash Y) &= \min_{Q \in \Delta_P} I_Q(S;Z|Y),\label{subeq:UIz}\\
			SI(S;Y,Z)&=\max_{Q \in \Delta_P} CoI_Q(S;Y;Z),\label{subeq:SI}\\
			CI(S;Y,Z)&= I(S;Y|Z)-UI(S;Y\backslash Z),\label{subeq:CI}
			\end{align}
		\end{subequations}
	where~$CoI$ is the coinformation and the subscript~$Q$ in~$CoI_Q$ and~$I_Q$ denotes that joint distribution on which the quantities are computed.
\end{definition}
In Appendix~\ref{app:opt}, we briefly comment on the optimization problems in Definitions~\ref{def:decomp_minsyn}, \ref{def:gdefi} and~\ref{def:gdefo}.
\begin{remark}
	The functions~$UI$ and~$SI$ depend only on the triple~$(\pi,{\kappa},{\mu})$. The function~$CI$ depends on the full joint~$P$. 
\end{remark}
\begin{lemma} [{{\cite[Lemma~6]{e16042161}}}]
	\label{lem:UI-zero}
	$UI(S;Y\backslash Z)$ vanishes if and only if $Y$ has no unique information about~$S$ w.r.t.~$Z$ (according to
	Definition~\ref{def:UI0}).
\end{lemma}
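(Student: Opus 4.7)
The plan is to establish both implications by exhibiting a correspondence between joint distributions $Q \in \Delta_P$ with $I_Q(S;Y|Z)=0$ and channels $\lambda \in \mathsf{M}(\Zcal;\Ycal)$ witnessing the Blackwell relation $\mu \uge_\Scal \kappa$ (which, by Theorem~\ref{thm:BW53} and the full-support assumption on $\pi_S$, is the operational characterization of Definition~\ref{def:UI0}). The minimum in \eqref{subeq:UIy} is attained since $I_Q(S;Y|Z)$ is continuous and $\Delta_P$ is compact, so $UI(S;Y\backslash Z)=0$ is equivalent to the existence of some $Q \in \Delta_P$ with $I_Q(S;Y|Z)=0$, i.e., with $S-Z-Y$ Markov under $Q$.

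For the ``if'' direction, I would start from $Z \mge_S Y$, which by Blackwell (Theorem~\ref{thm:BW53}) yields $\lambda \in \mathsf{M}(\Zcal;\Ycal)$ with $\kappa = \lambda \circ \mu$. Define the candidate distribution
\begin{equation*}
Q(s,y,z) := \pi_S(s)\,\mu_s(z)\,\lambda_z(y).
\end{equation*}
A direct marginalization shows $Q_{SZ}(s,z) = \pi_S(s)\mu_s(z)$ and $Q_{SY}(s,y) = \pi_S(s)(\lambda\circ\mu)_s(y) = \pi_S(s)\kappa_s(y)$, so $Q \in \Delta_P$. By construction $Y$ is conditionally independent of $S$ given $Z$, hence $I_Q(S;Y|Z)=0$, giving $UI(S;Y\backslash Z)=0$.

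For the ``only if'' direction, suppose a minimizer $Q \in \Delta_P$ satisfies $I_Q(S;Y|Z)=0$. Then $S-Z-Y$ is a Markov chain under $Q$, so the conditional $Q(y|s,z)$ depends only on $z$; call it $\lambda_z(y)$. Summing,
\begin{equation*}
\pi_S(s)\kappa_s(y) \;=\; Q_{SY}(s,y) \;=\; \sum_{z} Q_{SZ}(s,z)\,\lambda_z(y) \;=\; \pi_S(s)(\lambda\circ\mu)_s(y).
\end{equation*}
Because $\pi_S$ has full support, we may cancel to obtain $\kappa_s = (\lambda\circ\mu)_s$ for every $s$, i.e., $\mu \uge_\Scal \kappa$. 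Invoking Theorem~\ref{thm:BW53} once more yields $Z \mge_S Y$, which is precisely Definition~\ref{def:UI0}.

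There is no real obstacle here: the construction of $Q$ from $\lambda$ and of $\lambda$ from $Q$ are essentially tautological once one recognizes that the Markov condition $S-Z-Y$ is the only thing $\Delta_P$ is blind to. The only subtlety worth flagging in the write-up is the role of full support of $\pi_S$, which is needed to cancel in the ``only if'' direction and which is already a standing assumption in the setup preceding Definition~\ref{def:preorder1}; without it one would only recover $\kappa_s = (\lambda\circ\mu)_s$ on $\supp(\pi_S)$, and Blackwell sufficiency would have to be interpreted accordingly.
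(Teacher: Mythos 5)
The paper itself does not reprove this lemma; it cites it directly from~\cite{e16042161}, so there is no in-paper proof to compare against. Your argument, however, is correct and is the natural one. The constructions in both directions — $Q(s,y,z)=\pi_S(s)\mu_s(z)\lambda_z(y)$ from a factorizing channel $\lambda$, and conversely $\lambda_z(y)=Q(y\mid z)$ from a Markov-optimal $Q\in\Delta_P$ — are exactly right, the marginalizations check out, and the full-support assumption on $\pi_S$ is invoked precisely where needed.

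One remark worth making: the route through Blackwell's theorem and the channel-level relation $\mu\uge_{\Scal}\kappa$ is a mild detour. Definition~\ref{def:UI0} is stated via Definition~\ref{def:preorder2}, which already asserts the existence of a $Y'$ with $(S,Y')\sim(S,Y)$ and $S-Z-Y'$ Markov. Unpacking that, $Z\mge_S Y$ is \emph{verbatim} the statement that there exists $Q\in\Delta_P$ with $I_Q(S;Y\mid Z)=0$: take $Q$ to be the joint law of $(S,Z,Y')$, whose $(S,Z)$- and $(S,Y')$-margins coincide with $P_{SZ}$ and $P_{SY}$ by hypothesis. So the equivalence with $UI=0$ follows from compactness of $\Delta_P$ and continuity of $I_Q(S;Y\mid Z)$ alone, without invoking Theorem~\ref{thm:BW53}. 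Your version buys nothing extra but also costs nothing beyond the standing full-support assumption; it does make the parallel with Lemma~\ref{lem:UIo-zero} and the deficiency picture more explicit, which is a reasonable expository choice in this paper's context. A small point to tidy in a final write-up: when extracting $\lambda_z(y)=Q(y\mid s,z)$ you should note that for $z$ with $Q_Z(z)=0$ the conditional is undefined and $\lambda_z$ may be chosen arbitrarily; this does not affect the identity $\kappa=\lambda\circ\mu$ because the corresponding terms vanish.
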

\begin{remark} The following trivial bounds follow from~\eqref{eq:MIdec1}-\eqref{eq:MIdec3}.
	\begin{align}
	I(S;Y)-I(S;Z) \le UI(S;Y\backslash Z) \le \min\{I(S;Y),I(S;Y|Z)\}.
	\end{align}
	These bounds are also valid for the functions~$UI_o$ and~$UI_i$.
	In the adversarial setting in Example~\ref{ex:sktoy}, if either Eve has less information about~$S$ than Bob or, by symmetry, less information about~$Y$ than Alice, then Alice and Bob can exploit this difference to extract a secret key. In such a setting, bounds on the \emph{unique information common to~$S$ and~$Y$ w.r.t.~$Z$} are useful.
		\begin{align}
		\max\{I(S;Y)-I(S;Z),I(Y;S)-I(Y;Z)\}\le\max\{UI(S;Y\backslash Z),UI(Y;S\backslash Z)\}\le \min\{I(S;Y),I(S;Y|Z)\}.
		\end{align}	
		An interesting observation is that these bounds match the trivial bounds on the two-way secret key rate~\cite{maurerintrinsic} (see Section~\ref{subsec:skrate}).
		\begin{align}
		\max\{I(S;Y)-I(S;Z),I(Y;S)-I(Y;Z)\}\le \SKK{S}{Y}{Z} \le \min\{I(S;Y),I(S;Y|Z)\}.
		\end{align}
\end{remark}

The following lemma states that the quantities~$UI$, $SI$ and~$CI$ in Definition~\ref{def:decomp_minsyn} bound the unique, shared and complementary components in any nonnegative decomposition of the mutual information under an assumption that is in keeping with the Blackwell ordering.
\begin{lemma}[{{\cite[Lemma~3]{e16042161}}}]\label{lem:minsyn}
	Let $\TUI(S;Y\backslash Z)$, $\TUI(S;Z\backslash Y)$, $\TSI(S;Y,Z)$ and $\TCI(S;Y,Z)$ be nonnegative functions on~$\mathbb{P}_{\Scal\times\Ycal\times\Zcal}$
		satisfying equations~\eqref{eq:MIdec1}-\eqref{eq:MIdec3} and assume that the following holds:
		\begin{enumerate}
			\item[$(\ast)$] $\TUI$ depends only on the triple~$(\pi,{\kappa},{\mu})$.
		\end{enumerate}
		Then $\TUI \le UI$, $\TSI\ge SI$ and $\TCI \ge CI$ with equality if and only if there exists~$Q\in\Delta_P$ such that~$\TCI_{Q}(S:Y;Z)=0$.
\end{lemma}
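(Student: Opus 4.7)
The plan is to leverage condition $(\ast)$ together with the invariance of the pairwise marginals $P_{SY}$ and $P_{SZ}$ on $\Delta_P$: since $\TUI$ is a function of $(\pi_S,\kappa,\mu)$ only, we have $\TUI_Q = \TUI_P$ for every $Q\in\Delta_P$. This single observation drives the entire argument.

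First I would prove $\TUI \le UI$. Fix any $Q \in \Delta_P$. Applying decomposition \eqref{eq:MIdec2} with respect to $Q$ gives
\begin{equation*}
I_Q(S;Y|Z) \;=\; \TUI_Q(S;Y\backslash Z) + \TCI_Q(S;Y,Z) \;\ge\; \TUI_Q(S;Y\backslash Z) \;=\; \TUI_P(S;Y\backslash Z),
\end{equation*}
where the inequality uses nonnegativity of $\TCI$ and the last equality uses $(\ast)$. Minimizing the left-hand side over $Q \in \Delta_P$ yields $UI(S;Y\backslash Z) \ge \TUI_P(S;Y\backslash Z)$. The analogous bound for $\TUI(S;Z\backslash Y)$ is symmetric.

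Next I would derive the remaining two inequalities from consistency. Since pairwise marginals are preserved on $\Delta_P$, we have $I_Q(S;Y) = I(S;Y)$ for every $Q\in\Delta_P$, so
\begin{equation*}
SI(S;Y,Z) \;=\; \max_{Q\in\Delta_P} CoI_Q(S;Y;Z) \;=\; I(S;Y) - \min_{Q\in\Delta_P} I_Q(S;Y|Z) \;=\; I(S;Y) - UI(S;Y\backslash Z).
\end{equation*}
Combining with the consistency identity $\TSI = I(S;Y) - \TUI$ from \eqref{eq:MIdec1} gives $\TSI - SI = UI - \TUI \ge 0$. Likewise $\TCI - CI = UI - \TUI$ from \eqref{eq:MIdec2}, so $\TCI \ge CI$. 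In particular, the three inequalities $\TUI \le UI$, $\TSI \ge SI$, $\TCI \ge CI$ hold with equality simultaneously.

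Finally I would handle the equality condition. For the forward direction, suppose the common equality $\TUI_P = UI$ holds; pick a minimizer $Q^* \in \Delta_P$ of $I_Q(S;Y|Z)$ (which exists by continuity on the compact polytope $\Delta_P$). Then \eqref{eq:MIdec2} at $Q^*$ together with $(\ast)$ gives $\TCI_{Q^*}(S;Y,Z) = I_{Q^*}(S;Y|Z) - \TUI_{Q^*} = UI - \TUI_P = 0$. For the converse, if some $Q\in\Delta_P$ satisfies $\TCI_Q(S;Y,Z)=0$, then $\TUI_P = \TUI_Q = I_Q(S;Y|Z) \ge UI$, which combined with the already-proved $\TUI_P \le UI$ forces equality. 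The only subtle point is the existence of the minimizer $Q^*$, but this is immediate from continuity of $I_Q(S;Y|Z)$ on the compact set $\Delta_P$; the rest is bookkeeping with \eqref{eq:MIdec1}--\eqref{eq:MIdec3}.
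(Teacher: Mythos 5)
Your proof is correct. Note that this paper only cites the statement from~\cite{e16042161} and does not reproduce the argument; your approach is the standard one used there — observing that condition~$(\ast)$ makes $\TUI_Q$ constant over $\Delta_P$, then applying~\eqref{eq:MIdec2} at each $Q\in\Delta_P$ together with nonnegativity of~$\TCI$ to get $\TUI\le UI$, deriving the other two inequalities via the consistency relations, and obtaining the equality characterization from a minimizer $Q^*$ (which exists by compactness of $\Delta_P$ and continuity of $I_Q(S;Y|Z)$).
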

Lemma~\ref{lem:minsyn} is consistent with our interpretation of the function~$UI$ as the \emph{minimum-synergy unique information}.
\begin{corollary}
  \begin{align*}
	\delta_o^{\pi}(\mu,\kappa) &\le UI_o(S;Y\backslash Z) \le UI(S;Y\backslash Z),
          \\
          \delta_i^{\pi}(\bar{\mu},\bar{\kappa}) &\le UI_i(S;Y\backslash Z) \le UI(S;Y\backslash Z).
	\end{align*}
\end{corollary}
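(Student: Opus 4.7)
The corollary chains two types of inequality, and each direction has a different flavor, so I would handle them separately.

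For the lower bounds $\delta_o^{\pi}(\mu,\kappa)\le UI_o(S;Y\backslash Z)$ and $\delta_i^{\pi}(\bar{\mu},\bar{\kappa})\le UI_i(S;Y\backslash Z)$, nothing is really to be done: these drop out of Definitions \ref{def:decomp_gdefo} and \ref{def:decomp_gdefi} by inspection, since in each case the left hand side is the first argument of the maximum defining the right hand side. I would just write one line noting that $UI_o(S;Y\backslash Z)=\max\{\delta_o^{\pi}(\mu,\kappa),\,\delta_o^{\pi}(\kappa,\mu)+I(S;Y)-I(S;Z)\}\ge\delta_o^{\pi}(\mu,\kappa)$, and analogously for the input variant.

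For the upper bounds $UI_o(S;Y\backslash Z)\le UI(S;Y\backslash Z)$ and $UI_i(S;Y\backslash Z)\le UI(S;Y\backslash Z)$, the natural tool is Lemma~\ref{lem:minsyn}, which asserts that $UI$ dominates $\widetilde{UI}$ for any nonnegative decomposition whose unique-information term depends only on the triple $(\pi_S,\kappa,\mu)$. Thus the plan is to verify, in each of the two cases, the hypotheses of that lemma for the quadruple $(UI_\star,SI_\star,CI_\star)$ with $\star\in\{o,i\}$. Concretely: nonnegativity of $SI_\star$, $UI_\star$, $CI_\star$ is Proposition~\ref{lem:positivity_gdefo} (respectively Proposition~\ref{lem:positivity_gdefi}); satisfaction of the decomposition equations~\eqref{eq:MIdec1}--\eqref{eq:MIdec3} is the straightforward algebraic check already noted in the text just before each nonnegativity statement; and the marginal-dependence condition $(\ast)$ is the content of the Remarks following Definitions~\ref{def:decomp_gdefo} and~\ref{def:decomp_gdefi} (for the input case one uses that $(\pi_Y,\bar\kappa)$ and $(\pi_S,\kappa)$ determine each other through $P_{SY}$, and likewise for $(\pi_Z,\bar\mu)$ versus $(\pi_S,\mu)$, so dependence on $(\pi_Y,\pi_Z,\bar\kappa,\bar\mu)$ is the same as dependence on $(\pi_S,\kappa,\mu)$). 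Once these three hypotheses are in place, Lemma~\ref{lem:minsyn} applied to $\widetilde{UI}=UI_\star$ gives the desired bound.

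There is no real obstacle here; the only mildly delicate point is the bookkeeping for the input-deficiency case to translate the dependence on $(\pi_Y,\pi_Z,\bar\kappa,\bar\mu)$ into dependence on $(\pi_S,\kappa,\mu)$ so as to fit the hypothesis of Lemma~\ref{lem:minsyn} verbatim. The whole corollary is essentially a two-line consequence of the definitions together with Lemma~\ref{lem:minsyn}, so I would keep the write-up to a short paragraph rather than a formal multi-step proof.
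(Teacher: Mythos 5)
Your proposal is correct and matches the paper's (implicit) argument: the lower bounds are read off from the first arguments of the maxima in Definitions~\ref{def:decomp_gdefo} and~\ref{def:decomp_gdefi}, and the upper bounds follow by applying Lemma~\ref{lem:minsyn} once nonnegativity (Propositions~\ref{lem:positivity_gdefo},~\ref{lem:positivity_gdefi}), the decomposition equations, and condition~$(\ast)$ are verified. Your remark that dependence on $(\pi_Y,\pi_Z,\bar\kappa,\bar\mu)$ is equivalent to dependence on $(\pi_S,\kappa,\mu)$ — since both tuples encode exactly the pair of marginals $(P_{SY},P_{SZ})$ — is the right bookkeeping step to fit the input-deficiency case into the hypothesis of the lemma.
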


Proposition~\ref{prop:vanishingUIequiv} follows from Lemmas~\ref{lem:UIo-zero} and \ref{lem:UI-zero}, and Definition \ref{def:gdefo}.
\begin{proposition} \label{prop:vanishingUIequiv}
	$\delta_o^{\pi}(\mu,\kappa)=0 \iff UI_o(S;Y\backslash Z)=0 \iff UI(S;Y\backslash Z)=0$.
\end{proposition}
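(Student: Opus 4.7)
The plan is to verify a chain of three equivalences
\[
\delta_o^{\pi}(\mu,\kappa)=0 \iff UI_o(S;Y\backslash Z)=0 \iff UI(S;Y\backslash Z)=0,
\]
all of which reduce to the single operational condition from Definition~\ref{def:UI0} that $Y$ has no unique information about $S$ with respect to $Z$, i.e., $Z\mge_S Y$. The bulk of the work has already been done in earlier lemmas; the task is simply to glue them together via this common pivot.

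First, I would show that $\delta_o^{\pi}(\mu,\kappa)=0$ is equivalent to $Z\mge_S Y$. Unpacking Definition~\ref{def:gdefo},
\[
\delta_o^{\pi}(\mu,\kappa) = \min_{\lambda\in\mathsf{M}(\Zcal;\Ycal)} \sum_{s\in\Scal} \pi_S(s)\, D\bigl(\kappa_s \,\big\|\, (\lambda\circ\mu)_s\bigr).
\]
Each summand is nonnegative, and by the standing assumption $\pi_S$ has full support. Hence the minimum vanishes if and only if there exists $\lambda\in\mathsf{M}(\Zcal;\Ycal)$ with $\kappa_s=(\lambda\circ\mu)_s$ for every $s\in\Scal$, i.e., $\mu\uge_{\Scal}\kappa$. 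Under full support, this is in turn equivalent to $Z\mge_S Y$ by Blackwell's Theorem~\ref{thm:BW53} (together with the equivalence $\mu\uge_{\Scal}\kappa \iff Z\uge_S Y$ noted after Theorem~\ref{thm:BW53}). By Definition~\ref{def:UI0}, this is precisely the statement that $Y$ has no unique information about $S$ with respect to $Z$.

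Second, I would close the chain by invoking the two zero-characterization lemmas already stated: Lemma~\ref{lem:UIo-zero} gives $UI_o(S;Y\backslash Z)=0$ iff $Y$ has no unique information about $S$ w.r.t.\ $Z$ in the sense of Definition~\ref{def:UI0}, and Lemma~\ref{lem:UI-zero} gives the analogous characterization for $UI(S;Y\backslash Z)$. Concatenating these two equivalences with the one from the first step yields the three-fold equivalence claimed.

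I do not anticipate any essential obstacle: the genuine content is packaged inside Lemmas~\ref{lem:UIo-zero} and~\ref{lem:UI-zero}, and the argument here is a bookkeeping exercise. The one point to handle with care is the full-support hypothesis on $\pi_S$, which is what allows one to pass from "$\pi_S$-weighted KL divergence equals zero" to a pointwise identity of stochastic matrices, and thereby to Blackwell sufficiency. Without full support the first equivalence could fail on the null coordinates, but that situation is explicitly excluded by the framework set up earlier in the paper.
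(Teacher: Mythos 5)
Your argument is correct and is exactly the route the paper takes: the paper itself disposes of this proposition by citing Lemma~\ref{lem:UIo-zero}, Lemma~\ref{lem:UI-zero}, and the remark following Definition~\ref{def:gdefo} that $\delta_o^{\pi}(\mu,\kappa)=0 \iff Z\mge_S Y$, which is precisely the chain you assemble. The only cosmetic remark is that your detour through Blackwell's Theorem~\ref{thm:BW53} for the first equivalence is unnecessary (once $\pi_S$ has full support, $\kappa=\lambda\circ\mu$ directly furnishes the $Y'$ in Definition~\ref{def:preorder2} and vice versa), but it is not wrong.
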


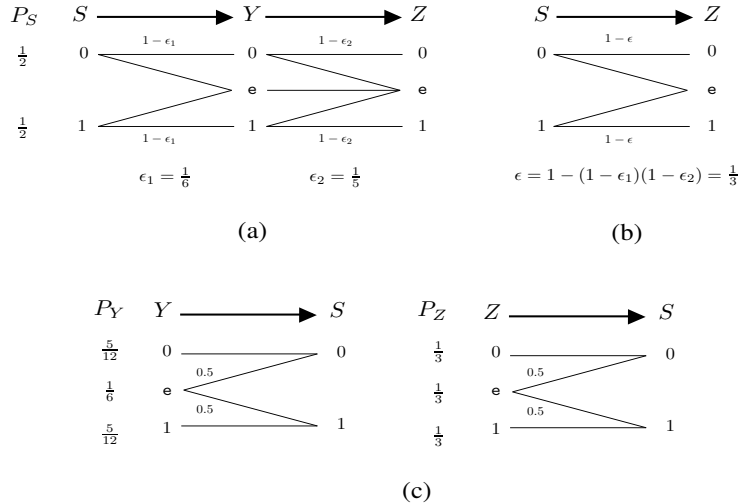
\begin{figure}
\centering
\tikzset{every picture/.style={line width=0.75pt}} 
\resizebox {10cm} {6.8cm} {
	\begin{tikzpicture}[x=0.75pt,y=0.75pt,yscale=-1,xscale=1]

	
	\draw  [line width=0.1mm]  (177.07,1093.35) -- (245.77,1093.35) ;

	\draw  [line width=0.1mm]  (177.07,1093.35) -- (244.55,1073.18) ;

	\draw  [line width=0.1mm]  (177.07,1053.17) -- (244.55,1073.18) ;

	\draw  [line width=0.1mm]  (177.07,1053.17) -- (245.77,1053.17) ;

	\draw  [line width=0.1mm]  (262.07,1093.35) -- (330.77,1093.35) ;

	\draw  [line width=0.1mm]  (262.07,1093.35) -- (329.55,1073.18) ;

	\draw  [line width=0.1mm]  (262.07,1053.17) -- (329.55,1073.18) ;

	\draw  [line width=0.1mm]  (262.07,1053.17) -- (330.77,1053.17) ;

	\draw  [line width=0.1mm]  (262.4,1073.17) -- (329.55,1073.18) ;

	\draw  [line width=0.3mm]  (176.5,1032.23) -- (243.5,1032.23) ;
	\draw [shift={(245.5,1032.54)}, rotate = 180.26] [fill={rgb, 255:red, 0; green, 0; blue, 0 }  ][line width=0.75]  [draw opacity=0] (8.93,-4.29) -- (0,0) -- (8.93,4.29) -- cycle    ;
	
	\draw [line width=0.3mm]   (261.5,1032.23) -- (328.5,1032.23) ;
	\draw [shift={(330.5,1032.54)}, rotate = 180.26] [fill={rgb, 255:red, 0; green, 0; blue, 0 }  ][line width=0.75]  [draw opacity=0] (8.93,-4.29) -- (0,0) -- (8.93,4.29) -- cycle    ;
	
	\draw [line width=0.1mm]   (407.07,1093.35) -- (475.77,1093.35) ;

	\draw [line width=0.1mm]   (407.07,1093.35) -- (474.55,1073.18) ;

	\draw [line width=0.1mm]   (407.07,1053.17) -- (474.55,1073.18) ;

	\draw [line width=0.1mm]   (407.07,1053.17) -- (475.77,1053.17) ;

	\draw [line width=0.3mm]   (408.5,1032.23) -- (475.5,1032.23) ;
	\draw [line width=0.1mm] [shift={(477.5,1032.54)}, rotate = 180.26] [fill={rgb, 255:red, 0; green, 0; blue, 0 }  ][line width=0.75]  [draw opacity=0] (8.93,-4.29) -- (0,0) -- (8.93,4.29) -- cycle    ;
	
	\draw  [line width=0.1mm]  (287.77,1220.17) -- (219.07,1220.17) ;

	\draw  [line width=0.1mm]  (287.77,1220.17) -- (220.29,1240.35) ;

	\draw  [line width=0.1mm]  (287.77,1260.35) -- (220.29,1240.35) ;

	\draw  [line width=0.1mm]  (287.77,1260.35) -- (219.07,1260.35) ;

	\draw [line width=0.3mm]   (218.5,1198.23) -- (285.5,1198.23) ;
	\draw [shift={(287.5,1198.54)}, rotate = 180.26] [fill={rgb, 255:red, 0; green, 0; blue, 0 }  ][line width=0.75]  [draw opacity=0] (8.93,-4.29) -- (0,0) -- (8.93,4.29) -- cycle    ;
	
	\draw  [line width=0.1mm]  (453.77,1221.17) -- (385.07,1221.17) ;

	\draw  [line width=0.1mm]  (453.77,1221.17) -- (386.29,1241.35) ;

	\draw  [line width=0.1mm]  (453.77,1261.35) -- (386.29,1241.35) ;

	\draw  [line width=0.1mm]  (453.77,1261.35) -- (385.07,1261.35) ;

	\draw  [line width=0.3mm]  (384.5,1199.23) -- (451.5,1199.23) ;
	\draw [line width=0.1mm] [shift={(453.5,1199.54)}, rotate = 180.26] [fill={rgb, 255:red, 0; green, 0; blue, 0 }  ][line width=0.75]  [draw opacity=0] (8.93,-4.29) -- (0,0) -- (8.93,4.29) -- cycle    ;

	\draw (140,1032.08) node [scale=0.9]  {${\textstyle P_{S}}$};
	\draw (139,1054.08) node [scale=0.7]  {$\tfrac{1}{2}$};
	\draw (139,1094.08) node [scale=0.7]  {$\tfrac{1}{2}$};
	\draw (168,1030.57) node [scale=0.9]  {${\textstyle S}$};
	\draw (255,1030.57) node [scale=0.9]  {$Y$};
	\draw (339,1030.57) node [scale=0.9]  {$Z$};
	\draw (170,1052.08) node [scale=0.7]  {$0$};
	\draw (169,1093.08) node [scale=0.7]  {$1$};
	\draw (255,1052.08) node [scale=0.7]  {$0$};
	\draw (255,1093.08) node [scale=0.7]  {$1$};
	\draw (255,1073.08) node [scale=0.7]  {$\mathtt{e}$};
	\draw (341,1052.08) node [scale=0.7]  {$0$};
	\draw (341,1093.08) node [scale=0.7]  {$1$};
	\draw (341,1073.08) node [scale=0.7]  {$\mathtt{e}$};
	\draw (208,1045.62) node [scale=0.5]  {$1-\epsilon _{1}$};
	\draw (297,1045.62) node [scale=0.5]  {$1-\epsilon _{2}$};
	\draw (208,1100.62) node [scale=0.5]  {$1-\epsilon _{1}$};
	\draw (297,1100.62) node [scale=0.5]  {$1-\epsilon _{2}$};
	\draw (440,1044.62) node [scale=0.5]  {$1-\epsilon $};
	\draw (401,1030.57) node [scale=0.9]  {${\textstyle S}$};
	\draw (487,1030.57) node [scale=0.9]  {$Z$};
	\draw (401,1093.08) node [scale=0.7]  {$1$};
	\draw (401,1052.08) node [scale=0.7]  {$0$};
	\draw (487,1051.08) node [scale=0.7]  {$0$};
	\draw (440,1100.62) node [scale=0.5]  {$1-\epsilon $};
	\draw (487,1093.08) node [scale=0.7]  {$1$};
	\draw (487,1073.08) node [scale=0.7]  {$\mathtt{e}$};
	\draw (212,1241.08) node [scale=0.7]  {$\mathtt{e}$};
	\draw (212,1261.08) node [scale=0.7]  {$1$};
	\draw (212,1218.08) node [scale=0.7]  {$0$};
	\draw (444,1120.62) node [scale=0.7]  {$\epsilon =1-( 1-\epsilon _{1})( 1-\epsilon _{2}) =\tfrac{1}{3}$};
	\draw (300,1219.08) node [scale=0.7]  {$0$};
	\draw (300,1259.08) node [scale=0.7]  {$1$};
	\draw (183,1195.08) node [scale=0.9]  {${\textstyle P_{Y}}$};
	\draw (183,1218.08) node [scale=0.7]  {$\tfrac{5}{12}$};
	\draw (183,1265.08) node [scale=0.7]  {$\tfrac{5}{12}$};
	\draw (210,1195.57) node [scale=0.9]  {$Y$};
	\draw (298,1195.57) node [scale=0.9]  {${\textstyle S}$};
	\draw (231,1230.08) node [scale=0.5]  {$0.5$};
	\draw (231,1251.08) node [scale=0.5]  {$0.5$};
	\draw (183,1241.08) node [scale=0.7]  {$\tfrac{1}{6}$};
	\draw (378,1241.08) node [scale=0.7]  {$\mathtt{e}$};
	\draw (378,1262.08) node [scale=0.7]  {$1$};
	\draw (378,1219.08) node [scale=0.7]  {$0$};
	\draw (466,1220.08) node [scale=0.7]  {$0$};
	\draw (466,1260.08) node [scale=0.7]  {$1$};
	\draw (346,1196.08) node [scale=0.9]  {${\textstyle P_{Z}}$};
	\draw (349,1219.08) node [scale=0.7]  {$\tfrac{1}{3}$};
	\draw (349,1266.08) node [scale=0.7]  {$\tfrac{1}{3}$};
	\draw (376,1196.57) node [scale=0.9]  {$Z$};
	\draw (464,1196.57) node [scale=0.9]  {${\textstyle S}$};
	\draw (398,1230.08) node [scale=0.5]  {$0.5$};
	\draw (398,1252.08) node [scale=0.5]  {$0.5$};
	\draw (349,1242.08) node [scale=0.7]  {$\tfrac{1}{3}$};
	\draw (211,1121.62) node [scale=0.7]  {$\epsilon _{1} =\tfrac{1}{6}$};
	\draw (297,1121.62) node [scale=0.7]  {$\epsilon _{2} =\tfrac{1}{5}$};
	
	\draw (255,1152.34) node  [align=left] {(a)};
	\draw (444,1153.34) node  [align=left] {(b)};
	\draw (338,1297.34) node  [align=left] {(c)};
	
	\end{tikzpicture}}
\caption{\emph{Distribution in Example~\ref{prop:vanishingUIequiv}b):} (a) Concatenated erasure channels with a binary symmetric input distribution. (b) Equivalent erasure channel $P_{Z|S}$ with erasure probability $\epsilon=\tfrac{1}{3}$. (c) The ``reverse'' erasure channels $P_{S|Y}$ and $P_{S|Z}$.}
\label{fig:counterexample}
\end{figure}

In~\cite{e16042161}, it was claimed that the vanishing sets of $UI_{\text{red}}=UI_{i}$ and $UI$ agree.  In the published version of this paper~\cite{Allerton2018}, this was used to show that the vanishing sets of $UI_{i}$ and $\delta^{\pi}_{i}$ agree with that of~$UI$.  As the following example shows, this is not correct:
\begin{example*}
	Consider the distribution depicted in Figure~\ref{fig:counterexample}(a). $P_S$ is a binary symmetric distribution and $P_{Y|S}$ and $P_{Z|Y}$ are symmetric erasure channels with erasure probabilities $\epsilon_1=\tfrac{1}{6}$ and $\epsilon_2=\tfrac{1}{5}$, resp. We have:
	\begin{itemize}
		\item $S-Y-Z$ is a Markov chain by construction. The erasure probability of the induced erasure channel $P_{Z|S}$ shown in Figure~\ref{fig:counterexample}(b) is greater than that of $P_{Y|S}$, whence $UI(S;Y\backslash Z)=I(S;Y|Z)=\tfrac{1}{6}>0$. 
		\item On the other hand, the induced ``reverse'' erasure channels $P_{S|Y}$ and $P_{S|Z}$ are identical (see Figure~\ref{fig:counterexample}(c)). Thus, $\delta_i^{\pi}=0$.
	\end{itemize}
\end{example*}

\vspace{.3cm}
\subsection{\texorpdfstring{$UI$}{UI} is an upper bound on the one-way secret key rate} \label{subsec:skrate}
In this section, we show that the function~$UI$ has a meaningful operational interpretation in a task where the goal is to extract a secret key from shared randomness and public communication.

In a \emph{two-way} secret key agreement protocol for the \emph{source model}~\cite{maurer1993,ahlswede1993,maurerintrinsic}, Alice, Bob and Eve observe~$n$ i.i.d. copies of random variables~$S$, $Y$ and~$Z$ respectively, where~$(S,Y,Z)$ is distributed according to some joint distribution~$P$ assumed to be known to all parties. The protocol proceeds in rounds: In each round either Alice or Bob can transmit a message over an insecure but authenticated public discussion channel. If Alice (resp., Bob) transmits message~$C_i$ in round~$i$, then~$C_i$ is a function of~$S^n$ (resp., $Y^n$) and all the messages received so far. After~$r$ rounds, Alice (resp., Bob) computes a key~$K_A^m\in\Kcal^{m}$ for~$\Kcal=\{0,1\}$ (resp.,~$K_B^m\in\Kcal^{m}$) as a function of~$S^n$ (resp.,~$Y^n$) and~$C:=(C_1,C_2,\cdots,C_{r})$, the collection of messages sent over the public channel. The protocol is \emph{one-way} if Alice is allowed to use the public channel only once and Bob cannot transmit at all: Alice computes a key~$K_A^m$ and a message~$C$ for Bob as a function of~$S^n$. Bob computes a key~$K_B^m$ as a function of~$Y^n$ and~$C$. In the limit~$n\to\infty$, the secret key must satisfy the following conditions:
\begin{align}
\Prv{K_A^{m}\ne K_B^{m}}=0,\quad \log|\Kcal^m|-H(K_A^m|Z^nC)=0.
\end{align}
The largest achievable rate~$\lim_{n\rightarrow \infty}\frac {m}n$ at which Alice and Bob can distill a key in the two-way and the one-way communication scenarios are resp. called, the \emph{two-way secret key rate}~$\SKK{S}{Y}{Z}$ and the \emph{one-way secret key rate}~$\SK{S}{Y}{Z}$. 

An exact expression for the one-way secret key rate is known. 
\begin{theorem}[{{\cite[Theorem~1]{ahlswede1993}}}]\label{thm:skaRate}
	The \emph{one-way secret key rate} $\SK{S}{Y}{Z}$ for the source model is
	    \begin{align*}
	    \SK{S}{Y}{Z}=\max\limits_{P_{UV|SYZ}} & I(U;Y|V)-I(U;Z|V) \label{eq:onewayrate}
		\end{align*}
for random variables~$U$, $V$ such that $YZ-S-UV$ is a Markov chain, and where both~$U$ and~$V$ have range of size at most~$|\Scal|+1$.
\end{theorem}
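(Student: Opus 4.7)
The plan is to prove the direct (achievability) and converse halves separately, following the Ahlswede-Csisz\'ar approach. The chief technical difficulty lies in the converse, where suitable auxiliary random variables must be identified so that the Csisz\'ar sum identity collapses an $n$-letter expression into a single-letter difference while automatically preserving the Markov constraint $YZ-S-UV$.

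For \textbf{achievability}, I would fix any $P_{UV|S}$ making $YZ-S-UV$ Markov and simulate $n$ i.i.d.\ pairs $(U_i,V_i)$ from $S^n$ through this channel. Alice then employs a layered random-binning code (Csisz\'ar-K\"orner prefixing combined with Wyner-style wiretap coding): she publicly reveals $V^n$ together with a Slepian-Wolf bin index for $U^n$ of rate $\approx H(U|V,Y)$, allowing Bob to reliably decode $U^n$ from $(V^n,Y^n,C)$ by joint typicality. Within each bin she further sub-bins $U^n$ into roughly $2^{n(I(U;Y|V)-I(U;Z|V)-\epsilon)}$ sub-bins and takes the sub-bin index as the key. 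A standard privacy-amplification argument then shows that this sub-bin index is asymptotically uniform and independent of $(Z^n,C)$, establishing $I(U;Y|V)-I(U;Z|V)$ as achievable.

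For the \textbf{converse}, consider any one-way $(n,m)$ protocol with $C=C(S^n)$, $K_A=K_A(S^n)$, and $K_B=K_B(Y^n,C)$. From $m=H(K_A)$, the secrecy condition $H(K_A|Z^n,C)=m-o(n)$, and Fano's inequality applied to the reliability condition $K_A\approx K_B$, I obtain
\begin{align*}
m - n\epsilon_n \;\le\; H(K_A|Z^n,C) - H(K_A|Y^n,C) \;=\; I(K_A;Y^n|C) - I(K_A;Z^n|C).
\end{align*}
Applying the Csisz\'ar sum identity with auxiliaries $U_i=(K_A,Y_1^{i-1},Z_{i+1}^n)$ and $V_i=(C,Y_1^{i-1},Z_{i+1}^n)$ yields
\begin{align*}
I(K_A;Y^n|C) - I(K_A;Z^n|C) \;=\; \sum_{i=1}^n \bigl[I(U_i;Y_i|V_i) - I(U_i;Z_i|V_i)\bigr].
\end{align*}
Single-letterizing via a uniform time-sharing index $T$ independent of everything else, and absorbing $T$ into $U$ and $V$, yields the desired bound. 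The Markov chain $YZ-S-UV$ holds because, given $S_i$, the pair $(Y_i,Z_i)$ is conditionally independent of all other coordinates and of $C$ (a function of $S^n$ alone).

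Finally, the cardinality bounds $|\mathcal U|,|\mathcal V|\le |\Scal|+1$ follow from the Fenchel-Eggleston-Carath\'eodory support lemma, preserving the $|\Scal|-1$ free components of $P_S$ together with the value of the objective $I(U;Y|V)-I(U;Z|V)$. The main obstacle remains the converse: one must choose $U_i$ and $V_i$ that simultaneously (i) produce the telescoped single-letter difference after the Csisz\'ar sum identity, and (ii) inherit the required Markov property from the one-way protocol structure.
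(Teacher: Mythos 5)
The paper does not prove Theorem~\ref{thm:skaRate}; it is quoted directly from Ahlswede and Csisz\'ar~\cite[Theorem~1]{ahlswede1993}, so there is no internal proof to compare against. Your reconstruction is the standard one from that reference and it is essentially correct: the converse via the Csisz\'ar sum identity with the auxiliary identifications $V_i=(C,Y_1^{i-1},Z_{i+1}^n)$ and a $U_i$ containing $K_A$, the Markov chain $Y_iZ_i - S_i - U_iV_i$ from the i.i.d.\ structure and the fact that $(C,K_A)$ are functions of $S^n$ only, time-sharing, and the support lemma for cardinalities; and the achievability via Csisz\'ar--K\"orner prefixing with double binning and privacy amplification. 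Two minor remarks. First, it is cleaner to take $U_i=(K_A,C,Y_1^{i-1},Z_{i+1}^n)$ so that $V_i$ is a deterministic function of $U_i$; with your choice the identity $I(K_A;Y_i|V_i)-I(K_A;Z_i|V_i)=I(U_i;Y_i|V_i)-I(U_i;Z_i|V_i)$ still holds because $V_i$ already contains $(C,Y_1^{i-1},Z_{i+1}^n)$, but the function relation $V=f(U)$ is the form in which the region is usually stated. Second, in the achievability sketch, ``revealing $V^n$'' should be read as transmitting a bin index for a codeword $V^n$ from a random codebook (rather than Alice literally simulating the channel $P_{UV|S}$, which would require private randomness that the source model does not grant); this is the usual random-coding construction and does not change the rate. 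With those cosmetic adjustments your argument matches the cited proof.
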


The one-way secret key rate is a lower bound on the two-way secret key rate. 

General properties of upper bounds on the secret key rates have been investigated under the rubric of \emph{protocol monotones}---nonnegative real-valued functionals of joint distributions that can never increase during protocol execution (see e.g., \cite{maurerunbreakable,gohari,christandl,gohari3}). For example, the \emph{intrinsic information}, an upper bound on the two-way secret key rate is a protocol monotone~\cite{RennerW03}.

We show that the function~$UI$ shares some intuitive and basic properties of the secret key rate.
Lemma~\ref{lem:LORR} states that if Alice and Bob discard certain realizations of their random variables by restricting their ranges, then the~$UI$ can never increase. See~\cite[Lemma 3]{maurerintrinsic} for a counterpart of this property for the two-way secret key rate. 
\begin{lemma} [Monotonicity under range restrictions~\cite{ISIT_RBOJ14}] \label{lem:LORR}
	$UI((S,S');(Y,Y') \backslash Z) \geq UI(S;Y \backslash Z)$.
\end{lemma}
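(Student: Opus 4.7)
The plan is to prove this by a \emph{marginalization + data processing} argument on the optimization defining $UI$ in \eqref{subeq:UIy}. Let $P'$ denote the joint distribution of $((S,S'),(Y,Y'),Z)$ and $P$ the $(S,Y,Z)$-marginal of $P'$. The left-hand side is
\[
UI((S,S');(Y,Y')\backslash Z) = \min_{Q' \in \Delta_{P'}} I_{Q'}((S,S');(Y,Y')\mid Z),
\]
so let $Q^*$ be a minimizer. I plan to exhibit a distribution in $\Delta_{P}$ derived from $Q^*$ and then upper bound the objective on the right-hand side by $I_{Q^*}((S,S');(Y,Y')\mid Z)$.

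First I would verify that the $(S,Y,Z)$-marginal $Q^\dagger := Q^*_{SYZ}$ of $Q^*$ lies in $\Delta_P$. Since $Q^* \in \Delta_{P'}$, its $((S,S'),(Y,Y'))$-marginal equals $P'_{(S,S'),(Y,Y')}$; marginalizing further over $S'$ and $Y'$ gives $Q^\dagger_{SY} = P_{SY}$. An analogous computation using $Q^*_{(S,S'),Z} = P'_{(S,S'),Z}$ yields $Q^\dagger_{SZ} = P_{SZ}$. Hence $Q^\dagger \in \Delta_P$, so $UI(S;Y\backslash Z) \le I_{Q^\dagger}(S;Y\mid Z)$ by definition.

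Next I would apply the data processing inequality to compare the two conditional mutual informations under $Q^*$. Since $S$ is a deterministic function of $(S,S')$ and $Y$ is a deterministic function of $(Y,Y')$, the chain rule and nonnegativity give
\[
I_{Q^*}((S,S');(Y,Y')\mid Z) \;\ge\; I_{Q^*}((S,S');Y\mid Z) \;\ge\; I_{Q^*}(S;Y\mid Z) \;=\; I_{Q^\dagger}(S;Y\mid Z).
\]
Chaining the inequalities yields
\[
UI((S,S');(Y,Y')\backslash Z) = I_{Q^*}((S,S');(Y,Y')\mid Z) \ge I_{Q^\dagger}(S;Y\mid Z) \ge UI(S;Y\backslash Z),
\]
which is the claim.

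I do not anticipate a serious obstacle: the only subtlety is checking that marginalization sends $\Delta_{P'}$ into $\Delta_P$, which is immediate because the constraints defining $\Delta_{P'}$ fix the larger marginals of $(S,S',Y,Y')$ and $(S,S',Z)$, and these determine the smaller marginals of $(S,Y)$ and $(S,Z)$. Everything else is a routine application of the data processing inequality for conditional mutual information.
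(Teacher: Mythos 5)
Your proposal is correct and follows essentially the same route as the paper's proof: take any $Q^*\in\Delta_{P'}$, observe that its $(S,Y,Z)$-marginal lies in $\Delta_P$, apply the chain rule / data processing inequality to get $I_{Q^*}((S,S');(Y,Y')\mid Z)\ge I_{Q^*}(S;Y\mid Z)$, and minimize. The only cosmetic difference is that you fix a minimizer up front while the paper takes the minimum over $Q'\in\Delta_{P'}$ at the end; the two are logically equivalent.
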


The following Lemma states that~$UI$ can never increase under local operations of Alice and Bob. The counterpart of this lemma for the secret key rate is~\cite[Lemma 4]{maurerintrinsic}.
\begin{lemma}[Monotonicity under Local Operations]\label{lem:LO}
	$UI$ cannot increase under local operations of $S$ or~$Y$.
\end{lemma}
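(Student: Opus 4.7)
The plan is to exploit the variational characterization $UI(S;Y\backslash Z)=\min_{Q\in\Delta_P} I_Q(S;Y|Z)$ of Definition~\ref{def:decomp_minsyn}, together with the fact that local operations preserve the relevant marginals that define the feasible set $\Delta_P$. Concretely, I treat ``local operations on $S$'' as replacing $S$ by a random variable $S'$ drawn through some stochastic kernel $P_{S'|S}$ (independently of $Y$ and $Z$), and similarly for $Y\mapsto Y'$ via $P_{Y'|Y}$. It suffices to handle the case of $S$, since the argument for $Y$ is symmetric. Write $P'$ for the induced joint distribution of $(S',Y,Z)$ obtained from $P$ by prepending the kernel $P_{S'|S}$.

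The key construction is a lift: for each $Q\in\Delta_P$, define
\[
Q'(s',s,y,z)=Q(s,y,z)\,P_{S'|S}(s'\mid s).
\]
By construction $Q'$ makes $S'-S-(Y,Z)$ a Markov chain, the $(S,Y,Z)$-marginal of $Q'$ equals $Q$, and the $S'$-channels $Q'_{S'Y}$, $Q'_{S'Z}$ are obtained from $Q_{SY}=P_{SY}$ and $Q_{SZ}=P_{SZ}$ by applying $P_{S'|S}$. Hence $Q'_{S'YZ}\in\Delta_{P'}$. The Markov property $S'-S-Y\mid Z$ (a consequence of $S'\perp\!\!\!\perp(Y,Z)\mid S$) gives $I_{Q'}(S';Y\mid S,Z)=0$, and expanding $I_{Q'}(S,S';Y\mid Z)$ in two ways via the chain rule yields the conditional data processing inequality
\[
I_{Q'}(S';Y\mid Z)\;\le\;I_{Q'}(S;Y\mid Z)\;=\;I_Q(S;Y\mid Z),
\]
where the last equality uses that the $(S,Y,Z)$-marginal of $Q'$ equals $Q$.

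Minimizing the right-hand side over $Q\in\Delta_P$ and observing that each lifted $Q'$ produces a feasible element of $\Delta_{P'}$, I obtain
\[
UI(S';Y\backslash Z)\;\le\;\min_{Q\in\Delta_P} I_{Q'}(S';Y\mid Z)\;\le\;\min_{Q\in\Delta_P} I_Q(S;Y\mid Z)\;=\;UI(S;Y\backslash Z).
\]
The same scheme, with $Y$ replaced by $Y'=Y'(Y)$ and the Markov chain $Y'-Y-(S,Z)$, yields $UI(S;Y'\backslash Z)\le UI(S;Y\backslash Z)$.

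The only mildly delicate step is confirming that the lifted $Q'$ lives in $\Delta_{P'}$: this needs both $\Delta_P$-membership of $Q$ and the independence built into the lift, so that the $S'$-marginals agree with those coming from $P$ through the local kernel. Everything else is the conditional data processing inequality, and no further machinery (e.g., monotonicity of relative entropy on the Le Cam side) is required because the variational form of $UI$ already encodes the right feasibility constraints.
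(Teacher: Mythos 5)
Your proof is correct and follows essentially the same strategy as the paper: lift the optimizer $Q^*\in\Delta_P$ through the local kernel $P_{S'|S}$, verify the lift lands in $\Delta_{P'}$, and invoke the conditional data processing inequality before minimizing. The paper phrases the lift directly as $Q^{*\prime}(s',y,z)=\sum_s P'(s'|s)Q^*(s,y,z)$ rather than carrying the four-variable coupling explicitly, but this is the same construction.
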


Lemma~\ref{lem:LOEve} states that if Eve is allowed access to some additional side information, then the~$UI$ can only decrease. 
See~\cite[Lemma 5]{maurerintrinsic} for a counterpart of this property for the two-way secret key rate. 
\begin{lemma} [Monotonicity under adversarial side information~\cite{ISIT_RBOJ14}] \label{lem:LOEve}
	$UI(S;Y \backslash (Z,Z')) \leq UI(S;Y \backslash Z)$.
\end{lemma}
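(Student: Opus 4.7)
The plan is to use the variational definition of $UI$ in Definition~\ref{def:decomp_minsyn}: $UI(S;Y\backslash Z) = \min_{Q \in \Delta_P} I_Q(S;Y|Z)$. Starting from an arbitrary $Q \in \Delta_P$, I would construct a feasible $Q' \in \Delta_{P'}$ for the enlarged problem involving $(S,Y,Z,Z')$ such that $I_{Q'}(S;Y|Z,Z') \le I_Q(S;Y|Z)$; taking the minimum over $Q$ then yields the claim.

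The natural construction is to ``paste'' the conditional $P_{Z'|SZ}$ onto $Q$: define
\[
Q'(s,y,z,z') := Q(s,y,z)\,P_{Z'|SZ}(z'\mid s,z).
\]
First I would verify feasibility, i.e.\ check that $Q'_{SY} = P_{SY}$ and $Q'_{SZZ'} = P_{SZZ'}$. Both are immediate from $Q \in \Delta_P$: summing out $(z,z')$ gives $Q_{SY} = P_{SY}$, and summing out $y$ gives $Q_{SZ}(s,z)\,P_{Z'|SZ}(z'|s,z) = P_{SZ}(s,z)\,P_{Z'|SZ}(z'|s,z) = P_{SZZ'}(s,z,z')$.

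Next, by construction $Y - (S,Z) - Z'$ is a Markov chain under $Q'$, so $H_{Q'}(Y|S,Z,Z') = H_{Q'}(Y|S,Z)$. Moreover the $(S,Y,Z)$-marginal of $Q'$ coincides with $Q$, hence $H_{Q'}(Y|S,Z)=H_Q(Y|S,Z)$ and $H_{Q'}(Y|Z)=H_Q(Y|Z)$. Combining these with the fact that conditioning reduces entropy, $H_{Q'}(Y|Z,Z') \le H_{Q'}(Y|Z)$, I obtain
\[
I_{Q'}(S;Y|Z,Z') = H_{Q'}(Y|Z,Z') - H_Q(Y|S,Z) \le H_Q(Y|Z) - H_Q(Y|S,Z) = I_Q(S;Y|Z).
\]
Taking the minimum over $Q \in \Delta_P$ on the right, and noting that $Q'$ is a particular element of $\Delta_{P'}$ so $UI(S;Y\backslash(Z,Z')) \le I_{Q'}(S;Y|Z,Z')$, gives the desired inequality.

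The argument is almost routine; the only subtlety is making sure the extension $Q'$ really lies in $\Delta_{P'}$, which requires $P'_{SZZ'}$ (the full marginal on Eve's side) to be preserved, not just $P_{SZ}$. The choice $Q'(s,y,z,z') = Q(s,y,z)P_{Z'|SZ}(z'|s,z)$ does exactly this because the conditional $P_{Z'|SZ}$ is attached after $(S,Z)$, which is where $Q$ already matches $P$. Once feasibility is checked, the entropy inequality is a two-line computation using the induced Markov property and the standard ``conditioning reduces entropy'' fact.
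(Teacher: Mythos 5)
Your proof is correct and follows essentially the same route as the paper's: the same extension $Q'(s,y,z,z') := Q(s,y,z)P'(z'\mid s,z)$ is pasted onto $Q$, the same Markov property $Y - (S,Z) - Z'$ is exploited, and then the minimum is taken over $Q\in\Delta_P$. The only cosmetic difference is that you phrase the final bound via ``conditioning reduces entropy'' on $H(Y|\cdot)$, whereas the paper expands $I_{Q'}(SZ';Y|Z)$ by the chain rule for mutual information; both are one-line equivalents.
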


Suppose Alice publicly announces the value of a random variable. Then Lemma~\ref{lem:PC} states that~$UI$ can never increase. 
\begin{lemma} [Monotonicity under one-way public communication] \label{lem:PC}
	$UI(S;(Y,f(S)) \backslash (Z,f(S))) \leq UI(S;Y \backslash Z)$ for all functions~$f$ over the support of~$S$.
\end{lemma}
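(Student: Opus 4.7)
The plan is to reduce the claim to a direct comparison of mutual informations on a common coupling. Let $T := f(S)$, $Y' := (Y, T)$, $Z' := (Z, T)$, and let $P'$ denote the joint distribution of $(S, Y', Z')$. The target inequality is $\min_{Q' \in \Delta_{P'}} I_{Q'}(S; Y' | Z') \le \min_{Q \in \Delta_P} I_Q(S; Y | Z)$.

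First I would lift an arbitrary $Q \in \Delta_P$ to a candidate $Q' \in \Delta_{P'}$: simply set $Q'(s, (y,t), (z,t')) := Q(s,y,z)\,\mathbf{1}[t = f(s)]\,\mathbf{1}[t' = f(s)]$. Because $T$ is a deterministic function of $S$, the two marginal constraints defining $\Delta_{P'}$,
\begin{align*}
Q'_{SY'}(s,(y,t)) &= Q(s,y)\,\mathbf{1}[t=f(s)] = P(s,y)\,\mathbf{1}[t=f(s)] = P'_{SY'}(s,(y,t)),\\
Q'_{SZ'}(s,(z,t)) &= Q(s,z)\,\mathbf{1}[t=f(s)] = P(s,z)\,\mathbf{1}[t=f(s)] = P'_{SZ'}(s,(z,t)),
\end{align*}
hold automatically, since $Q_{SY} = P_{SY}$ and $Q_{SZ} = P_{SZ}$. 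Hence $Q' \in \Delta_{P'}$.

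Next I would evaluate the conditional mutual information under $Q'$. Using the chain rule and the fact that the two copies of $T$ inside $Y'$ and $Z'$ coincide (and equal $f(S)$ a.s.),
\begin{align*}
I_{Q'}(S; Y' | Z') = I_{Q'}(S; Y, T | Z, T) = I_{Q'}(S; T | Z, T) + I_{Q'}(S; Y | Z, T) = I_{Q}(S; Y | Z, T),
\end{align*}
since $T$ is constant given the conditioning. Now, because $T = f(S)$ is a function of $S$, applying the chain rule again under $Q$,
\begin{align*}
I_Q(S; Y | Z) = I_Q(S, T; Y | Z) = I_Q(T; Y | Z) + I_Q(S; Y | T, Z) \ge I_Q(S; Y | T, Z) = I_{Q'}(S; Y' | Z').
\end{align*}

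The conclusion then follows by taking minima: for every $Q \in \Delta_P$ we have produced a $Q' \in \Delta_{P'}$ with $I_{Q'}(S;Y'|Z') \le I_Q(S;Y|Z)$, whence $\min_{Q'} I_{Q'}(S;Y'|Z') \le \min_{Q} I_Q(S;Y|Z)$. I do not anticipate a genuine obstacle; the only subtle point is verifying that the deterministic identification of the two $T$-copies in $Y'$ and $Z'$ transfers correctly through the $\Delta$ marginal constraints, which works precisely because $f$ is a function of $S$ and both $Q'_{SY'}$ and $Q'_{SZ'}$ inherit the indicator $\mathbf{1}[t=f(s)]$ from the structure of $P'$.
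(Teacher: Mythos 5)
Your proposal is correct and follows essentially the same strategy as the paper's own proof: lift $Q\in\Delta_P$ to $Q'\in\Delta_{P'}$ by attaching $T=f(S)$ deterministically (the paper writes this as $Q'(s,y,z,s')=Q(s,y,z)P'(s'|s)$, which is exactly your indicator $\mathbf{1}[t=f(s)]$), then use the chain rule to show $I_{Q'}(S;Y'|Z')=I_Q(S;Y|Z,T)\le I_Q(S;Y|Z)$, and conclude by minimizing over $Q$. The only difference is notational; both arguments hinge on the same two observations — that the lift preserves membership in the feasible set because $f$ is a function of $S$, and that conditioning on the deterministic function $T=f(S)$ cannot increase $I(S;Y|Z)$.
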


The following two properties, additivity and asymptotic continuity are important if the function~$UI$ is to furnish an upper bound on the asymptotic rate of transforming a given joint distribution into a secret key.

Lemma~\ref{lem:AD} states that~$UI$ is additive on tensor products.
\begin{lemma} [Additivity under tensor products.~{{\cite[Lemma~19]{e16042161}}}] 
	\label{lem:AD}
	For independent pairs of jointly distributed random variables~$(S_{1},Y_{1},Z_{1})$ and~$(S_{2},Y_{2},Z_{2})$, 
	\begin{align*}
		UI((S_1,S_2);(Y_1,Y_2)\backslash (Z_1,Z_2) = UI(S_1;Y_1\backslash Z_1) + UI(S_2;Y_2\backslash Z_2).
	\end{align*}
\end{lemma}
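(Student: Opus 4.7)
The plan is to prove the equality by establishing both inequalities separately, using the defining minimization from Definition~\ref{def:decomp_minsyn}: writing $P = P_1 \otimes P_2$ by independence of the two triples, both sides are expressed as minima of conditional mutual informations over coupling polytopes $\Delta_{P_i}$ and $\Delta_{P_1 \otimes P_2}$.

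\textbf{Upper bound.} Let $Q_i^{\star} \in \Delta_{P_i}$ attain the minimum defining $UI(S_i;Y_i \backslash Z_i)$ for $i=1,2$ (compactness of $\Delta_{P_i}$ ensures existence), and form the product coupling $Q^{\star} := Q_1^{\star} \otimes Q_2^{\star}$. Because $P = P_1 \otimes P_2$, a direct marginal computation gives $Q^{\star}_{SY} = P_{SY}$ and $Q^{\star}_{SZ} = P_{SZ}$, so $Q^{\star} \in \Delta_{P_1 \otimes P_2}$. Under $Q^{\star}$ the two triples are independent, so conditional mutual information splits additively and evaluates to $UI(S_1;Y_1 \backslash Z_1) + UI(S_2;Y_2 \backslash Z_2)$; the ``$\le$'' direction follows.

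\textbf{Lower bound (main obstacle).} The harder direction is the pointwise super-additivity
\[
I_Q\bigl(S_1 S_2;\, Y_1 Y_2 \mid Z_1 Z_2\bigr) \;\ge\; I_{Q^{(1)}}(S_1;\, Y_1 \mid Z_1) \;+\; I_{Q^{(2)}}(S_2;\, Y_2 \mid Z_2)
\]
for every $Q \in \Delta_{P_1 \otimes P_2}$, where $Q^{(i)}$ denotes the marginal of $Q$ on $(S_i, Y_i, Z_i)$. Since $Q^{(i)} \in \Delta_{P_i}$, each summand on the right is at least $UI(S_i; Y_i \backslash Z_i)$, and minimizing the left-hand side over $Q$ then yields ``$\ge$''.

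The crucial structural fact is that membership in $\Delta_{P_1 \otimes P_2}$ forces $(S_1, Y_1) \perp (S_2, Y_2)$ and $(S_1, Z_1) \perp (S_2, Z_2)$ under $Q$, even though e.g.\ $(Y_1, Z_2)$ need not decouple. I would first apply the chain rule and data processing to bound
\[
I_Q(S_1 S_2;\, Y_1 Y_2 \mid Z_1 Z_2) \;\ge\; I_Q(S_1;\, Y_1 \mid Z_1 Z_2) \;+\; I_Q(S_2;\, Y_2 \mid S_1 Z_1 Z_2),
\]
and then transport each term back to the target quantity using the algebraic identity $I(A;B \mid CD) - I(A;B \mid C) = I(A;D \mid BC) - I(A;D \mid C)$. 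The two ``subtracted'' corrections $I_Q(S_1; Z_2 \mid Z_1)$ and $I_Q(S_2; S_1 Z_1 \mid Z_2)$ both vanish as immediate consequences of the product-marginal structure ($(S_1,Z_1) \perp Z_2$ and $(S_1,Z_1) \perp (S_2,Z_2)$ respectively), so the ``added'' corrections, being nonnegative conditional mutual informations, provide the required slack. Taking the minimum over $Q \in \Delta_{P_1 \otimes P_2}$ closes the proof.
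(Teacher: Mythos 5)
The paper does not prove this lemma itself---it is cited verbatim as Lemma~19 of~\cite{e16042161}---so there is no in-paper proof to compare against. Your argument is nevertheless correct and complete. The upper bound via the product coupling $Q^{\star}=Q_1^{\star}\otimes Q_2^{\star}$ is routine once you check $Q^{\star}\in\Delta_{P_1\otimes P_2}$, which follows from $Q_i^{\star}\in\Delta_{P_i}$ together with $P=P_1\otimes P_2$. For the lower bound, the two structural facts you invoke are sound: the pair marginals of any $Q\in\Delta_{P_1\otimes P_2}$ factor, giving $(S_1,Z_1)\perp_Q(S_2,Z_2)$ and $(S_1,Y_1)\perp_Q(S_2,Y_2)$ (you only actually use the former). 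The chain-rule decomposition
\begin{align*}
I_Q(S_1S_2;Y_1Y_2\,|\,Z_1Z_2)\ge I_Q(S_1;Y_1\,|\,Z_1Z_2)+I_Q(S_2;Y_2\,|\,S_1Z_1Z_2),
\end{align*}
combined with the identity $I(A;B\,|\,CD)-I(A;B\,|\,C)=I(A;D\,|\,BC)-I(A;D\,|\,C)$, reduces each term to the corresponding single-system conditional mutual information plus a nonnegative correction, because $I_Q(S_1;Z_2\,|\,Z_1)=0$ and $I_Q(S_2;S_1Z_1\,|\,Z_2)=0$ by the factoring. Since the $(S_i,Y_i,Z_i)$-marginal $Q^{(i)}$ of any feasible $Q$ lies in $\Delta_{P_i}$, each term dominates $UI(S_i;Y_i\backslash Z_i)$, and minimizing over $Q$ closes the bound. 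This is essentially the standard route to additivity of $UI$.
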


We also have asymptotic continuity for the~$UI$.
\begin{theorem}[Asymptotic continuity] \label{thm:AC}
	$UI$ is asymptotically continuous.
\end{theorem}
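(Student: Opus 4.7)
The plan is to combine the classical (Fannes-type) asymptotic continuity of conditional mutual information with a quantitative continuity of the feasible polytope $\Delta_{P}$ as a function of $P$. Since by Definition~\ref{def:decomp_minsyn} the function $UI(S;Y\backslash Z)$ depends only on the pair of marginals $(P_{SY},P_{SZ})$ and equals $\min_{Q\in\Delta_P}I_Q(S;Y|Z)$, it suffices to show that small perturbations of the marginals translate to small perturbations of this constrained minimum. Fix two joints $P,P'\in\mathbb{P}_{\Scal\times\Ycal\times\Zcal}$ with $\norm{P-P'}_1\le\epsilon$; in particular $\norm{P_{SY}-P'_{SY}}_1\le\epsilon$ and $\norm{P_{SZ}-P'_{SZ}}_1\le\epsilon$.

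The two ingredients I would assemble are (i) a \emph{lifting lemma}: for every $Q\in\Delta_P$ there exists $Q'\in\Delta_{P'}$ with $\norm{Q-Q'}_1\le c\,\epsilon$, where $c$ is an absolute constant (not depending on the alphabet sizes); and (ii) the classical continuity estimate
\[
\bigl|I_{Q}(S;Y|Z)-I_{Q'}(S;Y|Z)\bigr|\ \le\ 2\norm{Q-Q'}_1\log|\Scal|+2\,h\!\left(\tfrac12\norm{Q-Q'}_1\right),
\]
where $h$ is the binary entropy, obtained from Fannes--Csisz\'ar applied to $H(S|Z)$ and $H(S|YZ)$. Granting (i) and (ii), let $Q^{*}\in\Delta_P$ minimize $I_Q(S;Y|Z)$ and let $Q^{*\prime}\in\Delta_{P'}$ be the lift provided by (i); then
\[
UI_{P'}(S;Y\backslash Z)\ \le\ I_{Q^{*\prime}}(S;Y|Z)\ \le\ I_{Q^{*}}(S;Y|Z)+g(\epsilon)\ =\ UI_P(S;Y\backslash Z)+g(\epsilon),
\]
with $g(\epsilon)\to 0$ as $\epsilon\to 0$; the reverse inequality follows by swapping $P$ and $P'$. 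This yields the desired asymptotic continuity with the correct $\epsilon\log|\Scal|$ scaling.

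The main obstacle is the lifting lemma (i): one must construct $Q'$ satisfying \emph{two} simultaneous marginal constraints together with nonnegativity, while controlling its $\ell_1$-distance from $Q$ by a dimension-independent multiple of $\epsilon$. A naive coupling that first repairs $P'_{SY}$ by rewriting $Q(s,y,z)=P_{SY}(s,y)\,Q(z\mid s,y)$ and replacing $P_{SY}$ with $P'_{SY}$ generically breaks the second marginal constraint $Q'_{SZ}=P'_{SZ}$; one then needs a second round of perturbation of $Q(z\mid s,y)$ to repair it without re-breaking the first. Making this two-step construction work with a uniform constant $c$ --- and in particular showing that the two marginal constraints, which share the common variable $S$, do not conflict in a way that forces a blow-up with alphabet size --- is the technical heart of the argument. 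An alternative route is to invoke a Hoffman-type stability bound for the transportation polytope $\Delta_{P'}$ at the point $Q$, but obtaining an alphabet-independent Hoffman constant for this specific system is essentially the same difficulty in disguise.
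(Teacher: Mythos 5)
Your strategy matches the paper's exactly: both arguments reduce asymptotic continuity of $UI$ to (i) a lifting lemma transporting a point of $\Delta_P$ to a nearby point of $\Delta_{P'}$ in $\ell_1$ with a universal constant, and (ii) the standard Alicki--Fannes-type continuity estimate for conditional mutual information, after which the two-sided comparison of minima is immediate. The chain of inequalities you write down is precisely what the paper uses in the proof of Theorem~\ref{thm:AC}.

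The genuine gap is the one you flag yourself: you state the lifting lemma but do not prove it, and your discussion of why a naive coupling or a Hoffman-type bound seems hard is diagnostic, not a proof. The paper supplies exactly this step as Lemma~\ref{lem:asymcont-helper}, with constant $c=5$, via an explicit repair argument that avoids the blow-up you were worried about. One starts from the signed measure $M = Q + P' - P$, which already has the correct $(S,Y)$- and $(S,Z)$-margins (both equal to those of $P'$, since $Q\in\Delta_P$) and is normalized, but may have negative entries. If $M(s_0,y_0,z_0)<0$, then because the slice margins $\sum_y M(s_0,y,z_0)=P'(s_0,z_0)\ge 0$ and $\sum_z M(s_0,y_0,z)=P'(s_0,y_0)\ge 0$, there exist $y_1\neq y_0$, $z_1\neq z_0$ with $M(s_0,y_1,z_0)>0$ and $M(s_0,y_0,z_1)>0$. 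Adding and subtracting $\nu=\min\bigl\{M(s_0,y_1,z_0),\,M(s_0,y_0,z_1),\,|M(s_0,y_0,z_0)|\bigr\}$ on the $2\times 2$ rectangle $\{y_0,y_1\}\times\{z_0,z_1\}$ within the fixed slice $s=s_0$ leaves both pair margins unchanged, strictly decreases the total negative mass by at least $\nu$, and costs at most $4\nu$ in $\ell_1$. Iterating terminates at a nonnegative $Q'=M''\in\Delta_{P'}$ with $\|M''-M\|_1\le 4\|P-P'\|_1$, hence $\|Q'-Q\|_1\le\|M''-M\|_1+\|M-Q\|_1\le 5\|P-P'\|_1$. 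The key structural point, which resolves your concern about the two constraints conflicting, is that they share only the coordinate $S$: a rectangle move inside a single $s$-slice is invisible to both pair margins simultaneously, so the repair never needs to undo itself and the constant is alphabet-independent. With this lemma in hand your proof is complete and coincides with the paper's.
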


The following theorem gives sufficient conditions for a function to be an upper bound for the secret key rate.
\begin{theorem} [{{\cite[Theorem 3.1]{christandl}},\cite[Lemma 2.10]{maurerunbreakable}}] \label{thm:monotone}
	Let~$M$ be a nonnegative real-valued function of the joint distribution of the triple~$(S,Y,Z)$ such that the following holds:
		\begin{trivlist}
			\item \emph{1. Local operations (LO) of Alice or Bob cannot increase~$M$:} For all jointly distributed RVs~$(S,Y,Z,S')$ such that~$YZ-S-S'$ is a Markov chain, $M(S,Y,Z) \ge M(S',Y,Z)$ (and likewise for~$Y$).
            \item \emph{2. Public communication (PC) by Alice cannot increase~$M$:} $M\bigl((S,f(S)),(Y,f(S)),(Z,f(S))\bigr) \leq M(S,Y,Z)$ for all functions~$f$ over the support of~$S$. 
            \item \emph{3. Normalization:} For a perfect secret bit $P_{SS\Delta}(0,0,\delta)=P_{SS\Delta}(1,1,\delta)=\tfrac{1}{2}$, $M(S,S,\Delta)=1$. 
			\item \emph{4. Asymptotic continuity:} $M$ is a asymptotically continuous function of~$(S,Y,Z)$. 
			\item \emph{5. Additivity:} $M$ is additive on tensor products.
		\end{trivlist}
               Then~$M$ is an upper bound for the \emph{one-way secret key rate}.
        
        If, in addition, $M$ does not increase under public communication by Bob (property~2., with $f(S)$ replaced by
        $g(Y)$ for some function $g$ over the support of~$Y$), then $M$ is an upper bound for the two-way secret key
        rate.
\end{theorem}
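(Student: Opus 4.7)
The plan is the standard conversion of monotonicity properties into an operational rate bound. For any achievable one-way rate $R$, take a sequence of $(n,m_n)$-protocols with $m_n/n\to R$ whose output is $\varepsilon_n$-close in total variation to the ideal key distribution, with $\varepsilon_n\to 0$. I would bound $M$ of the protocol's output simultaneously from above by $nM(S,Y,Z)$ (via the non-increasing properties) and from below by roughly $m_n$ (via the properties that pin down a perfect key). Combining yields $R\le M(S,Y,Z)$, and since $R$ is arbitrary, $M$ is an upper bound on $\SK{S}{Y}{Z}$.

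For the upper bound, additivity (property~5) immediately gives $M(S^n,Y^n,Z^n)=nM(S,Y,Z)$, and I would track $M$ through the protocol. Any one-way protocol factors into three elementary moves: (a) Alice applies a private stochastic map to $S^n$ producing $(K_A,C)$; (b) Alice publicly broadcasts $C$; (c) Bob applies a private stochastic map to $(Y^n,C)$ producing $K_B$. Moves (a) and (c) are local operations at Alice and Bob respectively, so property~1 forbids $M$ from increasing. Move (b) is exactly the hypothesis of property~2, with $f$ the projection of Alice's current variable onto its $C$-coordinate. Chaining these steps gives $M(K_A,K_B,(Z^n,C))\le nM(S,Y,Z)$.

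For the lower bound, note that the \emph{ideal} output triple factorizes as a tensor product of $(K_A,K_B,\emptyset)$ (a perfect $m_n$-bit key, trivial Eve) and $(\emptyset,\emptyset,(Z^n,C))$ (trivial Alice and Bob, Eve keeps the transcript), since in the ideal scenario the key is independent of the transcript. Additivity splits $M(\text{ideal})$ into these two pieces; iterated additivity together with normalization (property~3) evaluates the first piece as $m_n$; nonnegativity of $M$ bounds the second piece below by $0$. Hence $M(\text{ideal})\ge m_n$. Asymptotic continuity (property~4) promotes this to $M(\text{actual})\ge m_n-o(n)$ for the protocol's true output distribution. Combining with the upper bound from the previous paragraph and dividing by $n$ gives $m_n/n\le M(S,Y,Z)+o(1)$, whence $R\le M(S,Y,Z)$. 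The two-way case is identical, with Bob's public-communication rounds controlled by the parenthetical extension of property~2.

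The main technical delicacy is the asymptotic-continuity step: Eve's alphabet $(Z^n,C)$ grows exponentially in $n$, and Fannes-type continuity bounds scale as $\varepsilon_n\log|\text{alphabet}|$, so the deviation $|M(\text{actual})-M(\text{ideal})|$ scales like $\varepsilon_n\cdot O(n)$. For the bound to close, $\varepsilon_n$ must decay fast enough that this is $o(n)$, which holds under the standard (exponentially small error) definition of achievability, but the exact formulation of property~4 that one invokes must be strong enough to absorb the linear blow-up of $\log|\text{alphabet}|$ with the block length. The other steps are routine bookkeeping given the five listed properties.
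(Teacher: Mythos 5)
The paper does not prove Theorem~\ref{thm:monotone} itself; it is imported verbatim from the cited references (Christandl et al.\ and Maurer--Wolf), so there is no in-paper proof to compare against. Your argument is the standard ``protocol monotone'' proof from those references and is essentially correct in structure: chain properties 1--2 through the protocol's elementary moves to get $M(K_A,K_B,(Z^n,C))\le nM(S,Y,Z)$, use additivity and normalization to evaluate $M$ on the ideal tensor-factorized output as at least $m_n$, and bridge the gap with asymptotic continuity.

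One minor imprecision worth flagging. You write that closing the gap requires ``exponentially small error.'' That is stronger than needed, and the real subtlety is slightly different from how you phrased it. The asymptotic-continuity estimates used in this area (e.g.\ Theorem~\ref{thm:AC} of this paper) have the form $|M(P)-M(P')|\le 2h'(\epsilon) + c\,\epsilon\log\min\{|\Scal|,|\Ycal|\}$, i.e.\ they scale with the logarithm of the \emph{smaller} of Alice's and Bob's alphabets, \emph{not} Eve's. So the relevant alphabet size in your final comparison is the key alphabet $|\Kcal^{m_n}|=2^{m_n}$, giving a deviation of order $\epsilon_n m_n + h'(\epsilon_n)$, which is $o(n)$ as soon as $\epsilon_n\to 0$ (since $m_n/n$ is bounded). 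The exponential blow-up of Eve's alphabet $(Z^n,C)$ that you worry about never enters, precisely because the continuity bound does not depend on $|\Zcal|$. If one only had Fannes-type bounds scaling with $\log|\Zcal|$, your concern would be fatal and no decay rate of $\epsilon_n$ would help; so the key hypothesis one must make explicit in property~4 is this alphabet-independence on Eve's side, not a fast decay of $\epsilon_n$. With that reading, your proof goes through.
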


Theorem~\ref{thm:uppbound} is our main result in this Section.
\begin{theorem} \label{thm:uppbound}
   $UI$ is an upper bound for the one-way secret key rate.
\end{theorem}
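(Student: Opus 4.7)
The plan is to invoke the general recipe provided by Theorem~\ref{thm:monotone}, which reduces the task of upper-bounding the one-way secret key rate to verifying five abstract properties of the functional in question: monotonicity under local operations of Alice and of Bob, monotonicity under one-way public communication by Alice, normalization on the perfect secret bit, asymptotic continuity, and additivity on tensor products. Four of these five ingredients have already been recorded for $UI$ in this section: monotonicity under local operations of $S$ or $Y$ is Lemma~\ref{lem:LO}, monotonicity under one-way public communication by Alice is Lemma~\ref{lem:PC}, asymptotic continuity is Theorem~\ref{thm:AC}, and additivity on tensor products is Lemma~\ref{lem:AD}. So the conclusion will follow once the normalization condition is checked.

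The normalization check is the only piece requiring new (though routine) work, and I would handle it directly from Definition~\ref{def:decomp_minsyn}. For the perfect secret bit distribution $P_{SS\Delta}(0,0,\delta)=P_{SS\Delta}(1,1,\delta)=\tfrac{1}{2}$, identify Alice's variable with the first coordinate, Bob's with the second, and Eve's with $\Delta$. The marginal constraint $Q_{SY}=P_{SY}$ forces $Y=S$ almost surely, and the marginal constraint $Q_{SZ}=P_{SZ}=\pi_S\otimes\delta_\Delta$ forces $Z=\Delta$ to be the constant random variable, independently of $(S,Y)$. Hence $\Delta_P$ reduces to the singleton $\{P\}$, and
\begin{align*}
UI(S;S\backslash \Delta) \;=\; \min_{Q\in\Delta_P} I_Q(S;S\mid \Delta) \;=\; I_P(S;S\mid\Delta) \;=\; H(S) \;=\; 1 \text{ bit},
\end{align*}
which is exactly the normalization required by Theorem~\ref{thm:monotone}.

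With all five hypotheses of Theorem~\ref{thm:monotone} in place, the one-way secret key rate upper bound follows immediately. I do not foresee any real obstacle here: every hypothesis other than normalization is a verbatim citation of a preceding lemma, and normalization is a one-line computation once the marginal constraints defining $\Delta_P$ are unwound. The only mild care is to make sure the asymmetric roles of $S$ and $Y$ in the definition of $UI$ line up correctly with Alice's and Bob's roles in Theorem~\ref{thm:monotone}, but this is automatic because the perfect secret bit is symmetric in the two parties.
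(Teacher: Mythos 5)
Your proposal is correct and follows essentially the same route as the paper's own proof: invoke Theorem~\ref{thm:monotone} and cite Lemmas~\ref{lem:LO}, \ref{lem:PC}, \ref{lem:AD} and Theorem~\ref{thm:AC} for four of the five hypotheses. The only difference is that the paper simply asserts the normalization property, whereas you spell out the (correct) one-line computation that $\Delta_P$ collapses to the singleton $\{P\}$ for the perfect secret bit, giving $UI(S;S\backslash\Delta)=I_P(S;S\mid\Delta)=H(S)=1$~bit; this extra detail is welcome but does not change the structure of the argument.
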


\subsection{Vanishing unique informations and secret key agreement against active adversaries} \label{subsec:simulatability}
The secret key agreement scenario in Section~\ref{subsec:skrate} assumes that the public discussion channel is authenticated, i.e., Eve is only a passive adversary. When this assumption is no longer valid and Eve has both read and write access to the public channel, an all-or-nothing result is known~\cite{maurersimul1}: Either the same secret key rate can be achieved as in the authentic channel case, or nothing at all. Maurer defined the following property of a distribution to characterize the impossibility of secret key agreement against active adversaries.
\begin{definition} \label{def:simulatability}
	Given~$(S,Y,Z)\sim P$, we say that \emph{$Y$ is simulatable by~$Z$ w.r.t.~$S$} and write~$\simu_S(Z\to Y)$ if there exists a random variable~$Y'$ such that the pairs~$(S,Y)$ and $(S,Y')$ are statistically indistinguishable, and $S-Z-Y'$ is a Markov chain.
\end{definition}  
One would immediately recognize that~$\simu_S(Z\to Y)$ and $Z\mge_{S} Y$ in Definition~\ref{def:preorder1} are equivalent. 
Let~$S_{\leftrightarrow}^{\ast}({S};{Y}|{Z})$ denote the secret key rate in the active adversary scenario.
We restate Maurer's impossibility result (Theorem 11 in~\cite{maurersimul1}) in terms of the function~$UI$.
\begin{theorem} [{{\cite[Theorem 11]{maurersimul1}}}]\label{thm:simul}
	Let~$(S,Y,Z)\sim P$ be a distribution with~$\SKK{S}{Y}{Z}>0$. If either~$UI(S;Y\backslash Z)=0$ or~$UI(Y;S\backslash Z)=0$, then~$S_{\leftrightarrow}^{\ast}({S};{Y}|{Z})=0$, else~$S_{\leftrightarrow}^{\ast}({S};{Y}|{Z})=\SKK{S}{Y}{Z}$.
\end{theorem}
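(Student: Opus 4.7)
The plan is to reduce this theorem directly to Maurer's original impossibility result (Theorem~11 in~\cite{maurersimul1}), which is already stated in terms of the simulatability property of Definition~\ref{def:simulatability}. So the entire task boils down to showing that vanishing of the minimum-synergy unique information is equivalent to simulatability in the sense Maurer used.

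First, I would chain together the pieces the paper has already set up. By Lemma~\ref{lem:UI-zero}, $UI(S;Y\backslash Z) = 0$ if and only if $Y$ has no unique information about $S$ w.r.t.\ $Z$ in the sense of Definition~\ref{def:UI0}; and by Definition~\ref{def:UI0} this is in turn equivalent to $Z \mge_S Y$ (the degradation preorder of Definition~\ref{def:preorder2}). Comparing Definition~\ref{def:preorder2} with Definition~\ref{def:simulatability} word for word, the conditions are literally identical: both assert the existence of a random variable $Y'$ with $(S,Y')$ indistinguishable from $(S,Y)$ and $S-Z-Y'$ a Markov chain. Hence
\begin{equation*}
   UI(S;Y\backslash Z)=0 \ \iff\ \simu_S(Z\to Y).
\end{equation*}
By symmetry of the construction in the roles of $S$ and $Y$ (swapping the two variables and re-applying Lemma~\ref{lem:UI-zero} with Definitions~\ref{def:UI0} and~\ref{def:simulatability}) we likewise obtain
\begin{equation*}
   UI(Y;S\backslash Z)=0 \ \iff\ \simu_Y(Z\to S).
\end{equation*}

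Second, I would invoke Maurer's theorem as a black box. In the form used in~\cite{maurersimul1}, it says that for a distribution with $\SKK{S}{Y}{Z}>0$, the active-adversary rate $S^{\ast}_{\leftrightarrow}(S;Y|Z)$ equals $\SKK{S}{Y}{Z}$ whenever neither of the two simulatability conditions above holds, and vanishes as soon as at least one of them holds. Substituting the two equivalences just established translates ``$\simu_S(Z\to Y)$ or $\simu_Y(Z\to S)$'' into ``$UI(S;Y\backslash Z)=0$ or $UI(Y;S\backslash Z)=0$'', which is exactly the dichotomy claimed in Theorem~\ref{thm:simul}.

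There is no genuine obstacle here beyond verifying that the simulatability property appearing in Maurer's impossibility argument is the same two-sided condition that appears in the statement of Theorem~\ref{thm:simul} (the symmetric one, allowing either $Y$ to be simulated from $Z$ with respect to $S$, or $S$ to be simulated from $Z$ with respect to $Y$, since either direction lets Eve forge a consistent transcript). Once that matching is recorded, the proof is a one-line rewrite of Maurer's theorem under the dictionary $UI(\cdot;\cdot\backslash Z)=0 \Leftrightarrow \simu_{(\cdot)}(Z\to\cdot)$. The only place to be careful is that Lemma~\ref{lem:UI-zero} applies to the joint $(S,Y,Z)$ directly, so one should state both directions with the right argument order, making sure that the ``reverse'' version used to get the second equivalence is a legitimate instance of Lemma~\ref{lem:UI-zero} with the roles of $S$ and $Y$ interchanged.
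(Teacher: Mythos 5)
Your proposal is correct and follows the same route the paper (implicitly) takes: the paper itself notes that $\simu_S(Z\to Y)$ coincides verbatim with the degradation preorder, invokes Lemma~\ref{lem:UI-zero} to equate vanishing $UI$ with that preorder, and then treats Theorem~\ref{thm:simul} as a direct translation of Maurer's Theorem~11 under that dictionary, exactly as you do. No separate proof appears in the appendix precisely because, as you observe, nothing beyond this matching of definitions is needed.
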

\begin{remark}
Theorem~\ref{thm:simul} gives an operational significance to the vanishing~$UI$, namely, if either~$S$ or~$Y$ possess no unique information about each other w.r.t. $Z$, then Alice and Bob have no advantage in a secret key agreement task against an active Eve.
By Proposition~\ref{prop:vanishingUIequiv}, the same is true for~$UI_o$.
\end{remark}
Example~\ref{ex:sktoy1} shows a distribution for which $\SKK{S}{Y}{Z}>0$ but~$S_{\leftrightarrow}^{\ast}({S};{Y}|{Z})=0$. 
\begin{example}[{{\cite[Example~4]{gisinlinking}}}]\label{ex:sktoy1}
Consider the distribution
\begin{center}
	\resizebox{.2\textwidth}{!}{
	\centering
	\begin{tabular}{clcc}
		\toprule
		\multicolumn{1}{c}{} & \multicolumn{2}{c}{$S$}\\
		\cmidrule{2-3}
		$Y$ ($Z$) & 0 & 1\\
		\midrule
		0 &  $\nicefrac{1}{5}$\ (0) & $\nicefrac{1}{5}$\ (0)\\
		&  $\nicefrac{1}{5}$\ (1) & 0\ (1)\\
		1 &  $\nicefrac{1}{5}$\ (0) & 0\ (0)\\
		& 0\ (1) & $\nicefrac{1}{5}$\ (1)\\
		\bottomrule
	\end{tabular}}
\end{center}
where $Z$'s value is shown in parentheses. This distribution has~$I(S;Y\!\!\downarrow \!Z)=\TSI(S; Y,Z)=0.02$ and~$\TCI(S; Y,Z)=0.55$. \cite{gisinlinking} showed that a secret key agreement protocol exists such that~$\SKK{S}{Y}{Z}>0$. However since the pairwise marginal distributions of~$(S,Y)$, $(S,Z)$ and~$(Y,Z)$ are all identical, all the unique informations vanish. Hence~$S_{\leftrightarrow}^{\ast}({S};{Y}|{Z})=0$.
\end{example}

\section{Conclusion} 
The information decomposition framework extends earlier ideas to define information measures that make it possible to do a finer analysis than is possible with Shannon's mutual information alone. For example, measures of redundancy and synergy have long been sought in the neural sciences~\cite{syncode,LathamNirenberg05:Synergy_and_redundancy_revisited,schneidman2003synergy}. 

In this paper, we proposed two new quantities that can be interpreted as unique informations in the context of nonnegative mutual information decompositions. The quantities are derived using a generalized version of weighted Le Cam deficiencies that have a rich heritage in the theory of comparison of statistical experiments~\cite{torgersen}. 
We related the proposed quantities to the function~$UI$ proposed in~\cite{e16042161}. 
We gave an operational interpretation of the latter in terms of an upper bound on the number of secret key bits extractable per copy of a given joint distribution using local operations and one-way public communication. It might be of independent interest to characterize the set of distributions for which two-way secret key agreement is possible at a rate given by the unique information.

\pagebreak                           
\section*{APPENDIX}\label{appendix}
\subsection{Proofs} \label{app:proofs}
\begin{proof} [Proof of Proposition~\ref{prop:lecam_suffbound}]
	If~$\delta_o^{\pi}(\mu,\kappa)\le\epsilon$, then we can find some~$\lambda\in\mathsf{M}(\Zcal;\Ycal)$ such that~$D(\kappa\|\lambda\circ\mu|\pi_S)\le\epsilon$. Let~$\rho'\in\mathsf{M}(\Ycal;\Acal)$ and let $\rho=\rho'\circ\lambda$. Then
	\begin{align*}
	R(\pi_S,\mu,\rho,\ell)&-R(\pi_S,\kappa,\rho',\ell)\\
	=&\mathbb{E}_{s\sim\pi_S}\left[\mathbb{E}_{a\sim\rho\circ\mu_s}\ell(s,a)-\mathbb{E}_{a\sim\rho'\circ\kappa_s}\ell(s,a)\right]\\
	\le&\mathbb{E}_{s\sim\pi_S}\norm{\rho\circ\mu_s-\rho'\circ\kappa_s}_{\mathsf{TV}}\norm{\ell}_\infty\\
	=&\mathbb{E}_{s\sim\pi_S}\norm{\rho'\circ\lambda\circ\mu_s-\rho'\circ\kappa_s}_{\mathsf{TV}}\norm{\ell}_\infty\\
	\le&\mathbb{E}_{s\sim\pi_S}\norm{\lambda\circ\mu_s-\kappa_s}_{\mathsf{TV}}\norm{\ell}_\infty\\ 
	\le&\mathbb{E}_{s\sim\pi_S}\left[\sqrt{\tfrac{\lnn{2}}{2}D(\kappa_s\|\lambda\circ\mu_s)}\right]\norm{\ell}_\infty\\ 
	\le&\sqrt{\tfrac{\lnn{2}}{2}D(\kappa\|\lambda\circ\mu|\pi_S)}\norm{\ell}_\infty
	\le\sqrt{\tfrac{\lnn{2}}{2}\epsilon}\norm{\ell}_\infty, 
	\end{align*}
	where we have used the data processing inequality for the total variation (TV) distance in the fourth step and Pinsker's inequality in the fifth. The last step follows from the concavity of the square root function. Finally, take a minimum over~$\rho'$ and~$\rho$. This completes the proof.
\end{proof}

\begin{proof} [Proof of Proposition~\ref{lem:positivity_gdefo}]
	Let~$(S,Y,Z)\sim P$. Consider first the case when $UI_o(S;Y\backslash Z)=\delta_o^{\pi}({\mu},{\kappa})$. Then~$UI_o$ is nonnegative by definition. Let~$\lambda^{\ast}\in\mathsf{M}(\Zcal;\Ycal)$ achieve the minimum in~\eqref{eq:weighted_gdefo}. 
	Then~$CI_o$ is nonnegative since
	\begin{align*}
	&I(S;Y|Z)= 
	\sum_s P(s)\sum_z P(z|s)D(P(y|s,z)||P(y|z))\\
	&\ge \sum_s P(s)D\left(\sum_z P(z|s)P(y|s,z)||\sum_z P(z|s)P(y|z)\right)\\
	&=D(P_{Y|S}\|P_{Y|Z}\circ P_{Z|S}|P_S)\\
	&\ge D(P_{Y|S}\|\lambda^{\ast}_{Y|Z}\circ P_{Z|S}|P_S),
	\end{align*}
	where the first inequality follows from the convexity of the Kullback-Leibler divergence and the second inequality follows from the definition of~$\lambda^{\ast}$.
	
	$SI_o$ is nonnegative since
	\begin{align*}
	SI_o(S;Y,Z)&=D(P_{Y|S}\|P_Y|P_S)-D(P_{Y|S}\|\lambda^{\ast}_{Y|Z}\circ P_{Z|S}|P_S)\\
	&\ge D(P_{Y|S}\|P_Y|P_S)-D(P_{Y|S}\|P_{Y}\circ P_{Z|S}|P_S)=0.
	\end{align*} 
	
	The proof for the case when~$UI_o(S;Y\backslash Z)=\delta_o^{\pi}({\kappa},{\mu})+I(S;Y)-I(S;Z)$ or equivalently~$UI_o(S;Z\backslash Y)=\delta_o^{\pi}({\kappa},{\mu})$ by the consistency condition~\eqref{eq:consistency} is similar.
\end{proof}

\begin{proof} [Proof of Lemma~\ref{lem:UIo-zero}]
	If $Y$ has no unique information about~$S$ w.r.t.~$Z$, then $UI(S;Y\backslash Z) = 0$.  Thus, $UI_{o}(S;Y\backslash Z)$
	vanishes by Lemma~\ref{lem:minsyn}. Conversely, assume that $UI_{o}(S;Y\backslash Z)$ vanishes.  By
	Definition~\ref{def:decomp_gdefo}, since $\delta_{o}^{\pi}(\mu,\kappa)$ is a non-negative quantity, it follows that
	$\delta_{o}^{\pi}(\mu,\kappa)=0$.  By definition, $\kappa = \lambda\circ\mu$ for some
	$\lambda\in\mathsf{M}(\Zcal;\Ycal)$, whence $Y$ has no unique information about~$S$ w.r.t.~$Z$.
\end{proof}

\begin{proof} [Proof of Proposition~\ref{lem:positivity_gdefi}]
	The proof is similar to that of Proposition~\ref{lem:positivity_gdefo} and is omitted.
\end{proof}


\begin{proof} [Proof of Proposition~\ref{prop:UIred_equals_gdefi}]
	The proof is direct by noting that $I(S;Y)-I_{S}(Y\searrow Z)=D(\bar{\kappa}\|Q_{Y\searrow Z}|\pi_Y)$ and the fact that  
	\begin{align*} 
	D(\bar{\kappa}\|Q_{Y\searrow Z}|\pi_Y)&=\sum_{y\in\Ycal} \pi_Y(y)\min_{Q\in\conv\left(\{\bar{\mu}_z\}_{z\in\Zcal}\right)} D(\bar{\kappa}_y\|Q)\\
	&=\min_{\bar{\lambda}\in\mathsf{M}(\Ycal;\Zcal)} \sum_{y\in\Ycal} \pi_Y(y) D(\bar{\kappa}_y\|\bar{\mu}\circ\bar{\lambda}_y)
	=\delta_i^{\pi}(\bar{\mu},\bar{\kappa}).\qedhere
	\end{align*} 
\end{proof}

\begin{proof} [Proof of Lemma~\ref{lem:LORR}]
	Let~$(S,S',Y,Y',Z)\sim P'$ and let~$P$ be the~$(S,Y,Z)$-marginal of~$P'$. Let $Q'\in\Delta_{P'}$, and let~$Q$ be the~$(S,Y,Z)$-marginal of~$Q'$. Then~$Q\in\Delta_{P}$. Moreover,
	\begin{align*}
	I_{Q'}(SS';YY'|Z) \ge I_{Q'}(S;Y|Z)
	= I_{Q}(S;Y|Z).
	\end{align*}
	The proof is complete by taking the minimum over~$Q'\in\Delta_{P'}$.
\end{proof}

\begin{proof} [Proof of Lemma~\ref{lem:LO}]
	Consider random variables $S, S', Y, Z$ such that $YZ - S - S'$ is a Markov chain.  Let $P'$ be the marginal
	distribution of $(S',Y,Z)$, and let $P$ be the $(S,Y,Z)$-marginal.  Let
	$Q^{*} = \arg\min_{Q\in\Delta_{P}} I_{Q}(S;Y|Z)$, and let
	\begin{equation*}
	Q^{*\prime}(s',y,z) = \sum_{s}P'(s'|s) Q^{*}(s,y,z).
	\end{equation*}
	Then $Q^{*\prime}\in\Delta_{P'}$.  By definition,
	\begin{align*}
	UI(S;Y\backslash Z) &= I_{Q^{*}}(S;Y|Z)\\
	&\ge I_{Q^{*\prime}}(S';Y|Z)\ge \min_{Q'\in\Delta_{P'}} I_{Q'}(S;Y|Z)
	= UI(S';Y\backslash Z),
	\end{align*}
	where the conditional form of the data processing inequality was used.  This chain of inequalities shows that $UI$
	cannot increase under local operations of~$S$.
	
	Exchanging $Y$ and~$S$ in the above proof shows that the same is true for local operations of~$Y$ (the only slight
	difference occurs when checking that $Q^{*\prime}\in\Delta_{P'}$).
\end{proof}

\begin{proof} [Proof of Lemma~\ref{lem:LOEve}]
	Let $(S,Y,Z)\sim P$ and $(S,Y,Z,Z')\sim P'$.  By
	definition, $P$ is a marginal of~$P'$. Let~$Q\in\Delta_{P}$, and let
	$Q'(s,y,z,z') := Q(s,y,z) P'(z'|s,z)$ if~$P(s,z)>0$ and~$Q'(s,y,z,z')=0$ otherwise. Then~$Q'\in\Delta_{P'}$. Moreover, $Q$ is the $(S,Y,Z)$-marginal of~$Q'$, and~$Y-SZ-Z'$ is a Markov chain w.r.t.~$Q'$. Therefore,
	\begin{align*}
	I_{Q'}(S;Y|ZZ') &= I_{Q'}(SZ';Y|Z) - I_{Q'}(Z';Y|Z)
	\\ &
	\le I_{Q'}(SZ';Y|Z)
	= I_{Q'}(S:Y|Z) + I_{Q'}(Z';Y|S,Z)
	= I_{Q'}(S;Y|Z) = I_{Q}(S;Y|Z).
	\end{align*}
	The proof is complete by taking the minimum over~$Q\in\Delta_{P}$.
\end{proof}

\begin{proof} [Proof of Lemma~\ref{lem:PC}]
	Write~$S'=f(S)$. Let $(S,Y,Z)\sim P$ and $(S,Y,Z,S')\sim P'$. By definition, $P$ is a marginal of~$P'$. Let~$Q\in\Delta_{P}$, and define
	$Q'(s,y,z,s')= Q(s,y,z) P'(s'|s)$ if~$P(s)>0$ and~$Q'(s,y,z,s')=0$ otherwise. Then~$Q'(s,y,s')= Q(s,y) P'(s'|s)=P(s,y) P'(s'|s,y)=P'(s,y) P'(s'|s,y)=P'(s,y,s')$. Similarly, $Q'(s,z,s')=P'(s,z,s')$. Then~$Q'\in\Delta_{P'}$. Moreover, $Q$ is the $(S,Y,Z)$-marginal of~$Q'$,
	Therefore,
	\begin{align*}
	I_{Q'}(S;YS'|ZS') &= I_{Q'}(S;Y|ZS')
	\\ & = I_{Q'}(SS';Y|Z) - I_{Q'}(S';Y|Z)
	\\ &
	\le I_{Q'}(SS';Y|Z)=I_{Q'}(S;Y|Z)=I_{Q}(S;Y|Z)
	\end{align*}
	The proof is complete by taking the minimum over~$Q\in\Delta_{P}$. 
\end{proof}

To prove asymptotic continuity (Theorem~\ref{thm:AC}), we need the following lemma.
\begin{lemma} \label{lem:asymcont-helper}
	Let $P,P'\in\mathbb{P}_{\Scal\times\Ycal\times\Zcal}$.  For any $Q\in\Delta_{P}$ there exists $Q'\in\Delta_{P'}$ with
	$\|Q-Q'\|_1\le 5\|P-P'\|_1$.
\end{lemma}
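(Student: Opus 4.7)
The plan is to build $Q'$ by conditioning on~$S$, perturbing the conditional $(Y,Z)$-distribution via an explicit ``coupling-of-couplings'', and then reassembling against~$P'_S$. First, for each $s$ with $P(s)>0$ I would write $q_s(y,z)=Q(y,z\mid s)$; since $Q\in\Delta_P$, this has marginals $\kappa_s(y)=P(y\mid s)$ and $\mu_s(z)=P(z\mid s)$. The target conditionals are $\kappa'_s,\mu'_s$ derived from~$P'$, and I need a joint distribution $q'_s$ with those marginals that stays close to $q_s$.

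The central idea is the maximal-coupling trick. Let $\tau_s$ be a maximal coupling of $(\kappa_s,\kappa'_s)$, so that $\Pr_{\tau_s}\{Y\ne Y'\}=\tfrac12\|\kappa_s-\kappa'_s\|_1$, and let $\sigma_s$ be a maximal coupling of $(\mu_s,\mu'_s)$. Draw $(Y,Z)\sim q_s$, then independently sample $Y'\sim\tau_s(\cdot\mid Y)$ and $Z'\sim\sigma_s(\cdot\mid Z)$; the law $q'_s$ of $(Y',Z')$ has the required marginals, and a union bound on the coupling error yields $\|q_s-q'_s\|_1\le\|\kappa_s-\kappa'_s\|_1+\|\mu_s-\mu'_s\|_1$. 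On the (finitely many) fibers where $P(s)=0$ I would set $q'_s$ to any coupling of $\kappa'_s,\mu'_s$, e.g.\ the product $\kappa'_s\otimes\mu'_s$; the contribution of these fibers to $\|Q-Q'\|_1$ is pure $P'(s)$ mass, which is itself absorbed into $\|P_S-P'_S\|_1$.

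Having defined $Q'(s,y,z):=P'(s)\,q'_s(y,z)$, it is immediate that $Q'_{SY}=P'_{SY}$ and $Q'_{SZ}=P'_{SZ}$, so $Q'\in\Delta_{P'}$. The remaining work is bookkeeping: splitting $P(s)q_s-P'(s)q'_s=(P(s)-P'(s))q_s+P'(s)(q_s-q'_s)$ and summing over $s$ gives
\begin{align*}
\|Q-Q'\|_1\le \|P_S-P'_S\|_1+\sum_{s\colon P(s)>0}P'(s)\bigl(\|\kappa_s-\kappa'_s\|_1+\|\mu_s-\mu'_s\|_1\bigr).
\end{align*}
The weighted conditional terms can be absorbed into joint $\ell_1$-distances via the identity $P'(s)P(y\mid s)=P(s,y)+(P'(s)-P(s))P(y\mid s)$: taking absolute values and summing over $y$ and $s$ yields $\sum_s P'(s)\|\kappa_s-\kappa'_s\|_1\le \|P_{SY}-P'_{SY}\|_1+\|P_S-P'_S\|_1$, and analogously for the $Z$-side. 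Adding everything up produces $\|Q-Q'\|_1\le 3\|P_S-P'_S\|_1+\|P_{SY}-P'_{SY}\|_1+\|P_{SZ}-P'_{SZ}\|_1$, and each of these three marginal distances is at most $\|P-P'\|_1$ by the data-processing inequality for total variation, giving the advertised constant~$5$.

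The main obstacle is exactly this last accounting step: the coupling-of-couplings construction supplies the per-fiber bound essentially for free, but the constant~$5$ (instead of the naive~$2$ one sees in the unconditional coupling argument) is forced by the need to control the renormalizations $P(y\mid s)$, $P(z\mid s)$ when $P'(s)$ is large relative to $P(s)$, and by handling the fibers where $P(s)=0$ without double-counting them against $\|P_S-P'_S\|_1$.
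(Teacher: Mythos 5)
Your proof is correct, and it takes a genuinely different route from the paper's. The paper works directly with the joint distribution: it forms the signed measure $M=Q+P'-P$, which automatically has the required pair margins and $\|M-Q\|_1=\|P-P'\|_1$, and then repairs any negative entries by an iterative $2\times 2$ mass-shifting move (over $y_0,y_1,z_0,z_1$ within a fixed $s_0$-slice) that preserves both pair margins; the total correction is bounded by $4\|P-P'\|_1$, giving the constant $5$ after the triangle inequality. You instead condition on $S$, construct each conditional $q'_s$ via a maximal-coupling-of-couplings applied to the one-dimensional margins $\kappa_s,\mu_s$, and then do bookkeeping to absorb the $P'(s)$-weighted conditional TV distances into joint $\ell_1$-distances. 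Both arguments are elementary and yield the same constant $5$, but the accounting is distributed differently: the paper's decomposition is $\|M-Q\|_1 + \|M''-M\|_1 \le \|P-P'\|_1 + 4\|P-P'\|_1$, whereas yours is $3\|P_S-P'_S\|_1 + \|P_{SY}-P'_{SY}\|_1 + \|P_{SZ}-P'_{SZ}\|_1$ followed by data processing on each summand. Your version is arguably more transparent probabilistically (and it makes explicit which marginal distances contribute to the constant), while the paper's version avoids conditioning and the attendant edge cases at $P(s)=0$ or $P'(s)=0$, which you handle correctly but must address separately. One minor point worth spelling out in a final write-up is the verification that the law of $(Y,Y')$ under your sampling scheme is exactly $\tau_s$ (so that $\Pr[Y\neq Y']=\tfrac12\|\kappa_s-\kappa'_s\|_1$); this follows from $\Pr[Y=y,Y'=y']=\sum_z q_s(y,z)\tau_s(y'\mid y)=\kappa_s(y)\tau_s(y'\mid y)$, but it is the crux of the union-bound step and deserves a line.
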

\begin{proof}
	The signed measure $M = Q + P' - P$ has the same pair margins as $P'$ for the pairs $(S,Y)$ and~$(S,Z)$, and $M$ is
	normalized (that is, $\sum_{s,y,z}M(s,y,z) = 1$). If $M$ is nonnegative, the statement of the lemma is true,
	since $\|M - Q\|_1 = \|P - P'\|_1$ with $Q'=M$. Otherwise there exist $s_{0},y_{0},z_{0}$ with $M(s_{0},y_{0},z_{0}) < 0$. Since
	$\sum_{y}M(s_{0},y,z_{0}) = P'(s_{0},z_{0}) \ge 0$ and $\sum_{z}M(s_{0},y_{0},z) = P'(s_{0},y_{0}) \ge 0$ there exist
	$y_{1}\neq y_{0}$ and $z_{1}\neq z_{0}$ with $M(s_{0},y_{1},z_{0}) > 0$ and $M(s_{0},y_{0},z_{1}) > 0$.  Let
	$\nu = \min\bigl\{M(s_{0},y_{1},z_{0}), M(s_{0},y_{0},z_{1}), |M(s_{0}, y_{0}, z_{0})|\bigr\} > 0$, and consider
	the measure $M'$ defined by
	\begin{align*}
	M'(s_{0}, y_{0}, z_{0}) & = M(s_{0}, y_{0}, z_{0}) + \nu,\\
	M'(s_{0}, y_{1}, z_{0}) & = M(s_{0}, y_{1}, z_{0}) - \nu,\\
	M'(s_{0}, y_{0}, z_{1}) & = M(s_{0}, y_{0}, z_{1}) - \nu,\\
	M'(s_{0}, y_{1}, z_{1}) & = M(s_{0}, y_{1}, z_{1}) + \nu,\\
	M'(s_{0}, y, z) & = M(s, y, z), \qquad\mbox{\rlap{ otherwise.}}
	\end{align*}
	Then $M'$ has the same pair margins as $M$ and $P'$ for the pairs $(S,Y)$ and~$(S,Z)$, and $M$ is normalized.
	Moreover, the absolute sum over the negative entries decreases:
	\begin{align*}
	\sum_{s, y, z: M(s, y, z) < 0} | M(s, y, z) | \ge \sum_{s, y, z: M'(s, y, z) < 0}| M'(s, y, z) | + \nu
	> \sum_{s, y, z: M'(s, y, z) < 0}| M'(s, y, z) |.
	\end{align*}
	Finally, $\|M - M'\|_1\le 4 \nu$.  Iterating the procedure, one obtains a normalized measure $M''$ that has the
	same pair margins as $M$ and $P'$ and that is non-negative.  The triangle inequality shows
	\begin{align*}
	\|M'' - M\|_1 \le 4 \sum_{s, y, z: M(s, y, z) < 0}| M(s, y, z) | \le 4 \| P - P' \|_1.
	\end{align*}
	Thus, $\|M'' - Q\|_1 \le \|M'' - M\|_1 + \|M - Q\|_1 \le 5 \| P - P' \|_1$.  Hence, the statement follows with
	$Q' = M''$.
\end{proof}

\begin{proof} [Proof of Theorem~\ref{thm:AC}]
	Let $P,P'\in\mathbb{P}_{\Scal\times\Ycal\times\Zcal}$, let $Q\in \argmin_{Q\in\Delta_{P}} I_{Q}(S;Y|Z)$.  Choose
	$Q'\in\Delta_{P'}$ as in Lemma~\ref{lem:asymcont-helper}, and let $Q^{*}\in \argmin_{Q\in\Delta_{P'}} I_{Q}(S;Y|Z)$.
	Then
	\begin{align*}
	&{UI}_{P'}(S;Y\backslash Z) - {UI}_{P}(S;Y\backslash Z)\\
	&= I_{Q^{*}}(S;Y|Z) - I_{Q}(S;Y|Z) \\
	&\le I_{Q'}(S;Y|Z) - I_{Q}(S;Y|Z)\\
	&\le 2h'(\epsilon) + \tfrac{5}{2} \epsilon \log\min\{|\Scal|, |\Ycal|\},
	\end{align*}
	where $h'(\epsilon) = \max_{0 \le x\le \min\{\tfrac{5\epsilon}{2},1\}}h(x)$, $h(\cdot)$ is the binary entropy function 
	and where we have used the fact that for any~$P,P'\in\mathbb{P}_{\Scal\times\Ycal\times\Zcal}$, if~$\|P-P'\|_1= 2\epsilon$ then $I_{P'}(S;Y|Z)\le I_P(S;Y|Z)+2h(\epsilon)+\epsilon\log\min\{|\Scal|,|\Ycal|\}$ \cite{RennerW03}. 
\end{proof}

\begin{proof} [Proof of Theorem~\ref{thm:uppbound}]
	The function~$UI$ satisfies additivity (see Lemma~\ref{lem:AD}), asymptotic continuity (see Theorem~\ref{thm:AC}) and the Normalization property. Furthermore,~$UI$ satisfies monotonicity under local operations of Alice and Bob (see Lemma~\ref{lem:LO}) and monotonicity under one-way public communication by Alice (see Lemma~\ref{lem:PC}). Hence, by Theorem~\ref{thm:monotone}, $UI$ is an upper bound to the one-way secret key rate.
\end{proof}

\vspace{.5cm}
\subsection{Optimization problems} \label{app:opt}
The optimization problems in definitions~\ref{def:gdefo}, \ref{def:gdefi} and~\ref{def:decomp_minsyn} of the functions~$ \delta_o^{\pi}$, $\delta_i^{\pi}$, and~$UI$, respectively, are convex programs. Furthermore, the feasible sets in definitions~\ref{def:gdefi} and~\ref{def:decomp_minsyn} have a nice product structure in relation to the corresponding objective functions. 

Given~$(S,Y,Z)\sim P$ and a value~$s\in\Scal$, let~$A_s: \mathbb{P}_{\Ycal\times\Zcal} \to \mathbb{P}_{\Ycal}\times\mathbb{P}_{\Zcal}$ be the linear map that computes the marginal distributions of~$Y$ and~$Z$, given~$P_{YZ|s}$. Each~$Q\in\Delta_P$ in~\eqref{subeq:UIy} has the form~$Q=\pi_S Q_{YZ|S}$ with~$Q_{YZ|S}\in \bigtimes_{s\in\Scal}\Delta_{P,s}$, where
\begin{align}
\Delta_{P,s}:= \big\{Q_{YZ}\in \mathbb{P}_{\Ycal\times\Zcal}\colon Q_{Y}(y)=\kappa_s(y),
\text{ }Q_{Z}(z)=\mu_s(z)\big\},\text{ } s\in\Scal\label{eq:delPs}
\end{align}
is a fiber of~$A_s$ passing through~$P_{YZ|s}$. Then~$\Delta_{P,s}=(P_{YZ|s}+\ker{A_s})\cap\mathbb{P}_{\Ycal\times\Zcal}$. 
As an intersection of an affine space with the probability simplex, $\Delta_{P,s}$ is a polytope. Using a variational representation of the conditional mutual information (a.k.a. the Golden formula), the objective in~\eqref{subeq:UIy} can be written as a double minimization.
\begin{align} \label{eq:CUIOp}
U&I(S;Y\backslash Z)=\min_{Q \in\Delta_P} I_Q(S;Y|Z)\notag\\
=& \min_{Q_{YZ|S} \in\bigtimes_{s\in\Scal}\Delta_{P,s}} \min_{\lambda\in\mathsf{M}(\Zcal;\Ycal)} D(Q_{YZ|S} \|\lambda\times \mu| \pi_S)\notag\\
=& \min_{\lambda\in\mathsf{M}(\Zcal;\Ycal)} \sum_s \pi_S(s) \min_{Q_{YZ|s} \in\Delta_{P,s}} D(Q_{YZ|s}\|\lambda\times \mu_s). 
\end{align}
\cite{CUIfullver} proposed an efficient alternating minimization algorithm to solve~\eqref{eq:CUIOp}. An alternating minimization algorithm recursively fixes one of the two free variables and minimizes the other. When~$\lambda$ is fixed, each summand involves computing an $I$-projection~\cite{csiszarIproj} to the linear family of probability distributions of~$(Y,Z)$ defined by~$\Delta_{P,s}$.
The different summands can be optimized parallely for the different values of~$s\in\Scal$. 

For the weighted input deficiency~\eqref{eq:weighted_gdefi}, 
\begin{align*} 
\delta_i^{\pi}(\bar{\mu},\bar{\kappa})&=\min_{\bar{\lambda}\in\mathsf{M}(\Ycal;\Zcal)} \sum_{y\in\Ycal} \pi_Y(y) D(\bar{\kappa}_y\|\bar{\mu}\circ\bar{\lambda}_y)\\ 
&=\sum_{y\in\Ycal} \pi_Y(y)\min_{Q\in\conv\left(\{\bar{\mu}_z\}_{z\in\Zcal}\right)\subset\mathbb{P}_{\Scal}} D(\bar{\kappa}_y\|Q), 
\end{align*} 
each summand involves computing a $rI$-projection~\cite{csiszarIproj} to a convex set of probability distributions. 
Again, the summands can be optimized separately for the different values of~$y\in\Ycal$.
This is useful in practice in parallelizing the computations.

\vspace{2cm}                      
\bibliographystyle{IEEEtran}
\bibliography{IEEEabrv,allerton2018}

%

\end{document}